\documentclass[10pt,journal]{IEEEtran}
\pdfoutput=1
\usepackage{amsmath,amsthm,amsfonts,amssymb}
\newtheorem{lemma}{Lemma}

\usepackage{algorithmic}
\usepackage{algorithm}
\usepackage{array}
\usepackage[caption=false,font=normalsize,labelfont=sf,textfont=sf]{subfig}
\usepackage{textcomp}
\usepackage{stfloats}
\usepackage{url}
\usepackage{verbatim}
\usepackage{graphicx}
\usepackage{float}
\usepackage{cite}
\usepackage{diagbox}
\allowdisplaybreaks[3]

\usepackage[pdfencoding=auto,
colorlinks,
linkcolor=black,
anchorcolor=black,
urlcolor = black,
citecolor=blue]{hyperref}
\hyphenation{op-tical net-works semi-conduc-tor IEEE-Xplore}

\begin{document}


 \title{Integrated Sensing and Communication with Reconfigurable Distributed Antenna and Reflecting Surface: Joint Beamforming and Mode Selection}

\author{Pingping Zhang, Jintao Wang, Yulin Shao, and Shaodan Ma
\vspace{-3mm}
\thanks{The authors are with the State Key Laboratory of Internet of Things for Smart City (SKL-IoTSC) and the Department of Electrical and Computer Engineering, University of Macau, Macau S.A.R., China (e-mails:  yc17433@um.edu.mo; wang.jintao@connect.um.edu.mo; ylshao@um.edu.mo; shaodanma@um.edu.mo).}}

\maketitle
\begin{abstract}
This paper presents a new integrated sensing and communication (ISAC) framework, leveraging the recent advancements of reconfigurable distributed antenna and reflecting surface (RDARS).
RDARS is a programmable surface structure comprising numerous elements, each of which can be flexibly configured to operate either in a reflection mode, resembling a passive reconfigurable intelligent surface (RIS), or in a connected mode, functioning as a remote transmit or receive antenna.
Our RDARS-aided ISAC framework effectively mitigates the adverse impact of multiplicative fading when compared to the passive RIS-aided ISAC, and reduces cost and energy consumption when compared to the active RIS-aided ISAC.
Within our RDARS-aided ISAC framework, we consider a radar output signal-to-noise ratio (SNR) maximization problem under communication constraints to jointly optimize the active transmit beamforming matrix of the base station (BS), the reflection and mode selection matrices of RDARS, and the receive filter. To tackle the inherent non-convexity and the binary integer optimization introduced by the mode selection in 
 this optimization challenge, we propose an efficient iterative algorithm with proved convergence based on majorization minimization (MM) and penalty-based methods.  Numerical and simulation results demonstrate the superior performance of our new framework, and clearly verify substantial distribution, reflection as well as selection gains obtained by properly configuring the RDARS. 
\end{abstract}

\begin{IEEEkeywords}
Reconfigurable distributed antennas and reflecting surfaces (RDARS), integrated sensing and communication (ISAC), signal-to-noise ratio (SNR) maximization.
\end{IEEEkeywords}
\vspace{-2mm}
\section{Introduction}
\subsection{Background}
By 2025, it is expected that 75.4 billion devices will be interconnected within the vast landscape of the Internet of Things (IoT) \cite{1}. This rapid expansion lays the foundation for an interconnected future world, but it also places a heavy burden on wireless communication networks, intensifying the issue of spectrum congestion.
On the other hand, emerging applications introduced by IoT, such as smart homes, robot navigation, and autonomous vehicles, demand not only high-quality wireless connectivity, but also an elevated capacity to perceive and comprehend the environment\cite{zhang2021enabling}. These pose new challenges and requirements for next-generation wireless communication systems.

In response to the above challenges, integrated sensing and communication (ISAC) emerges as a promising technology \cite{liu2020joint,liu2022survey,liu2022integrated,yuan2021integrated}. 
ISAC opens the door for communication to tap into spectrum resources initially designated for radar, and utilizes the reflections of communication signals to sense the environment. This transformation acts as a catalyst, propelling us away from conventional communication systems towards a new paradigm where communication and sensing seamlessly intertwine.
Furthermore, by employing a unified hardware platform for both communication and sensing, ISAC drives remarkable enhancements in hardware efficiency\cite{8892631,wang2023stars}.

ISAC, however, faces a significant challenge in practice when line-of-sight (LoS) links are unavailable between the transmitter, e.g., a base station (BS), and the sensing target. In such scenarios, the sensing performance can significantly deteriorate.
One intriguing solution to tackle this challenge involves the deployment of reconfigurable intelligent surfaces (RIS).
In essence, an RIS is a planar metasurface composed of numerous passive elements, each capable of adjusting the phase and amplitude of incident signals, effectively reshaping the wireless propagation environment \cite{10056867,zhang2023double,9429987}. When integrated into the ISAC system, RIS opens up the possibility of creating virtual LoS links between the BS and the target, particularly in areas where physical obstacles would otherwise obstruct the signal's path.
\vspace{-2mm}
\subsection{Related work}
RIS-aided ISAC systems have garnered substantial interests in recent research endeavors. Many of these studies focus on fully-passive RIS setups \cite{9364358,luo2023ris,xing2022passive,9844707,liu2022joint,xu2023joint,he2022ris,9591331,10138058}. For example,
in the pioneering work \cite{9364358}, the authors first introduced the deployment of RIS in a dual-function radar and communication (DFRC) system.
They jointly designed the transmit beamforming matrix and RIS passive beamforming matrix by forming a radar signal-to-noise ratio (SNR) maximization problem.
Refs. \cite{luo2023ris,xing2022passive,9844707,liu2022joint,xu2023joint} delved into the active and passive beamforming design of RIS-aided ISAC systems, each with unique design criteria ranging from weighted radar SNR summation maximization, user SNR maximization, weighted combination of the radar mutual information (MI) and the communication sum-rate maximization, radar output signal-to-interference-plus-noise ratio (SINR) maximization and the radar MI maximization.


Nevertheless, it is well understood that passive RIS-aided systems introduce a unique challenge: signals must traverse both the transmitter-RIS and RIS-receiver links, giving rise to the ``multiplicative fading'' effect. In simpler terms, the path loss of the transmitter-RIS-receiver link is not the sum of the path losses of the transmitter-RIS and RIS-receiver links but rather their product\cite{9998527}. While this can be mitigated by deploying a large number of reflecting elements, it inevitably leads to heightened signal processing complexity.
To overcome these limitations, the concepts of active RIS\cite{long2021active} and hybrid RIS\cite{sankar2022beamforming}
have emerged as promising alternatives. 

Diverging from the passive RIS, the active RIS comprises numerous active elements equipped with additional power amplifiers (PAs) in the circuit. This setup allows the active RIS to not only reflect but also amplify the incident signals. Several research efforts have been dedicated to exploring the potential of active RIS-aided ISAC systems \cite{salem2022active,yu2023active,9979782,10184278}. For instance, 
the authors in \cite{yu2023active} maximized the radar output SINR by jointly optimizing the transmit beamforming matrix at the BS and the reflection matrix at the active RIS, and the simulation results revealed that the active RIS can achieve more than 60 dB radar SNR increasement in contrast to the passive RIS.

When it comes to the hybrid RIS, it combines both passive and active elements, effectively striking a balance between the passive and active RIS. 
 The authors in \cite{sankar2022beamforming} investigated the potential of the hybrid RIS in the ISAC system by formulating a  worst-case target illumination power maximization problem.




Active RIS and hybrid RIS incorporate the PAs into the circuits and allow for signal amplification prior to reflection, ultimately bolstering signal strength at the receiver's end.
As a result, they can counteract the ``multiplicative fading'' effect. However, it is important to note that these advancements come at the cost of increased power consumption\footnote{The increased power consumption is necessary for signal amplification to compensate the two concatenated transmission losses in the  transmitter-RIS and RIS-receiver links.} and the need for more intricate hardware implementations. Moreover, timely control of various RISs to adapt to the channel changes is indispensable for unleashing the RIS potential in performance enhancement. The RIS controlling issue is generally overlooked and practical solution is  yet to be investigated. 
To address these limitations and strike a balance between power efficiency and signal enhancement, our recent work introduced a new architecture known as the reconfigurable distributed antenna and reflecting surface (RDARS) \cite{ma2023reconfigurable}. The hardware platform of RDARS-aided  multi-user communication system is shown in Fig.~\ref{fig:demo}.

\begin{figure}[t]
    \centering
    \includegraphics[width=3.1in]{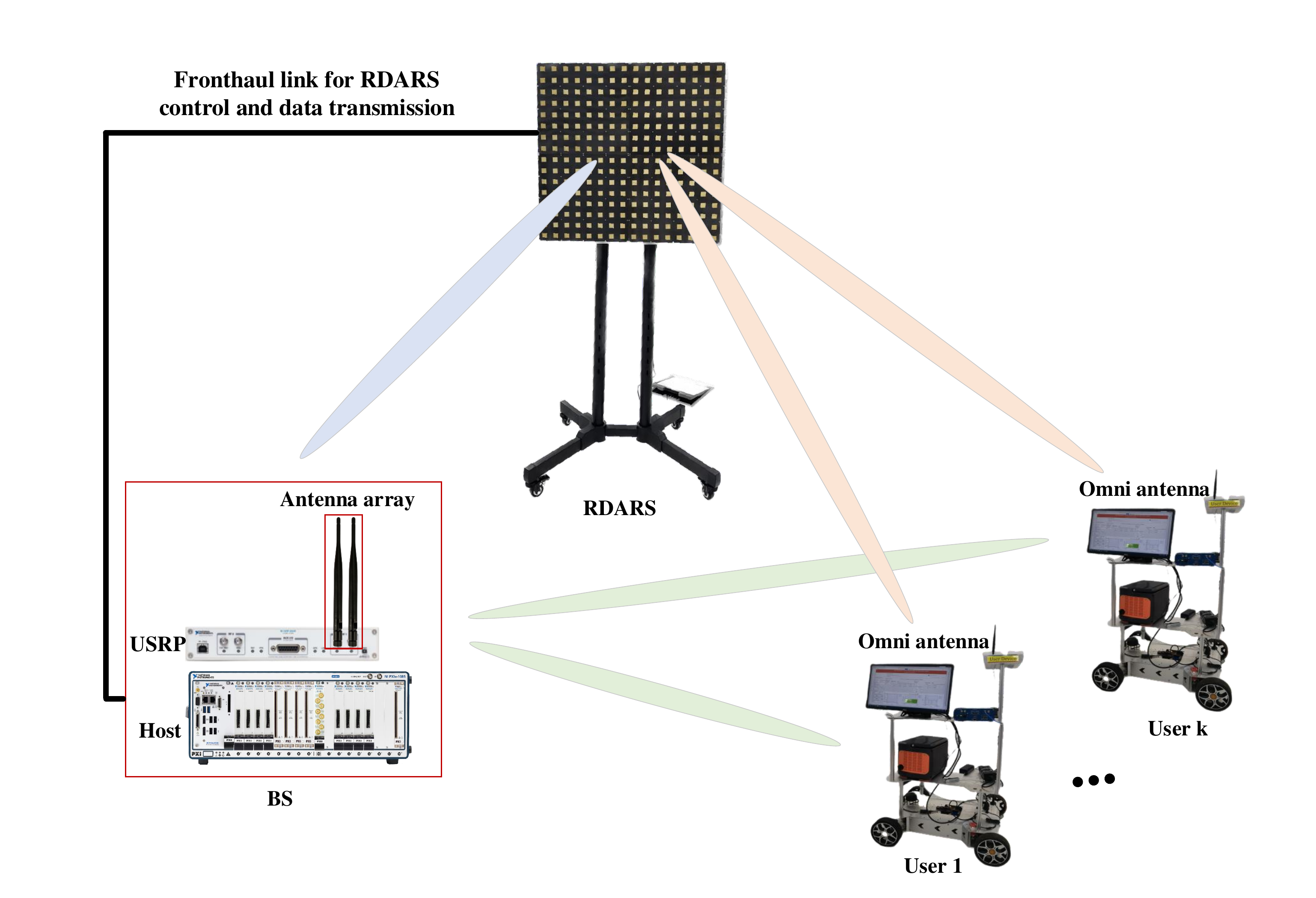}
    \vspace{-2mm}
    \caption{The hardware platform of RDARS-aided multi-user communication system.}
   \label{fig:demo}
\end{figure} 

RDARS draws inspiration from both RIS and distributed antenna systems (DAS). In this architecture, each individual element is capable of two distinct modes: the reflection mode and the connected mode.
In the reflection mode, an element functions like a traditional fully-passive RIS, whereas in the connected mode, it operates as a remote transmit or receive antenna connected to the BS through wires or fibers. This dual-mode capability bestows upon RDARS the benefits of both ``reflection gain'' from the reflecting elements and ``distribution gain'' from the connected elements. Thus, the ``multiplicative fading'' effect can be compensated without much additional power consumption and the timely RDARS control can be well solved by leveraging the fronthaul connection between the RDARS and BS. Furthermore, RDARS exhibits a remarkable degree of flexibility and generality, encompassing DAS and RIS as special cases within its framework.
The experimental results in \cite{ma2023reconfigurable} revealed that, even when one element operates in the connected mode, the RDARS-assisted communication system provides a $21\%$ gain over the DAS-assisted system, and a $170\%$ gain over the passive-RIS assisted system. These results demonstrate the great potential of RDARS for enhancing the performance of wireless communication systems.


\vspace{-2mm} 
\subsection{Contributions}
In recognition of the limitations of RIS-aided ISAC systems, the paper leverages the advantages of RDARS and introduces a new RDARS-aided ISAC framework.

Our main contributions are summarized as follows:

\begin{itemize}
\item We put forth a new RDARS-aided ISAC framework, where the BS communicates with users and senses the target simultaneously with the help of a RDARS. The RDARS comprises multiple elements, each of which can be dynamically configured to function in either the reflection mode or the connected mode.
In contrast to passive RIS-aided ISAC systems, RDARS-aided ISAC effectively mitigates the detrimental effects of multiplicative fading, thereby enhancing communication reliability. 
Compared with active RIS-aided ISAC, RDARS-aided ISAC offers advantages in terms of cost-efficiency and reduced energy consumption.
\item Within our RDARS-aided ISAC framework, we formulate a joint beamforming and mode selection problem to optimize the transmit beamforming matrix of the BS, the reflection and mode selection matrices of RDARS, and the radar receive filter. To address this non-convex radar output SNR maximization challenge under stringent communication requirements, we develop an efficient algorithm that combines alternation optimization (AO), majorization-minimization (MM) and penalty techniques. In the first step, the AO algorithm dissects the original problem into multiple tractable sub-problems, yielding optimal solutions for the transmit beamforming and receive filter. Then, by applying penalty and MM techniques, the optimal RDARS reflection matrix and the mode selection matrix can be derived in closed forms.
\item The superior performance achieved by our RDARS-aided ISAC framework is verified through numerical and simulation results. It is shown that RDARS-aided ISAC remarkably outperforms the passive RIS-aided ISAC. This superiority arises from RDARS's ability to not only provide passive reflection gains but also harness the distribution gain offered by elements operating in the connected mode. Furthermore, our proposed joint beamforming optimization algorithm optimizes the placement of elements in the connected mode, thus bringing out additional selection gain to substantially enhance the performance of the proposed RDARS-aided ISAC framework.
\end{itemize}

The remainder of this paper is organized as follows. Section \uppercase\expandafter{\romannumeral2} gives the RDARS-aided ISAC system model and the corresponding optimization problem. The proposed algorithm to solve the challenging optimization problem is presented in Section \uppercase\expandafter{\romannumeral3}. Then, we conduct extensive simulations to validate the performance of the proposed algorithm in Section \uppercase\expandafter{\romannumeral4}. Lastly, Section 
 \uppercase\expandafter{\romannumeral5} concludes this paper.

Notations: Lower-case letters, bold-face lower-case letters and bold-face upper-case letters are used for scalars, vectors and matrices, respectively. For a matrix $\mathbf{A}$, $\mathbf{A}^{T}$, $\mathbf{A}^{H}$, $\mathbf{A}^{\ast}$, $\mathrm{tr}(\mathbf{A})$, $\mathbf{A}(i,j)$, and $\lVert  \mathbf{A}\rVert_{F}$ represent its transpose, conjugate transpose, conjugate, trace, the $(i,j)$th entry and Frobenius norm respectively. $\mathrm{diag}\{\mathbf{A}\}$ represents a diagonal matrix whose diagonal elements are the same with matrix $\mathbf{A}$ while $\mathrm{diag}\{\mathbf{a}\}$ represents a diagonal matrix whose diagonal elements are the same with vector $\mathbf{a}$. $\mathbf{I}$ refers to an  identity matrix.
$\Re\{\cdot\}$ and $\Im\{\cdot\}$ denote the real part and the imaginary part of a complex input, respectively. $\lvert \cdot \rvert$ denotes the absolute value operation and $\lVert \cdot\rVert$ denotes the Euclidean norm operation. $\otimes$ stands for the Kronecker product and the angle of a complex input is denoted as $\text {arg}(\cdot)$. $\mathbb{C}^{a\times b }$ and $\mathbb{R}^{c\times d }$ denote the complex-valued space with $a\times b$ dimensions and the real-valued space with $c\times d$ dimensions, respectively. $\mathrm{E}$ denotes the expectation operation. $\mathcal {CN}({0},\sigma _{1,k} ^{2} )$ represents the distribution of a circularly symmetric complex Gaussian (CSCG) random variable with zero mean and variance $\sigma _{1,k} ^{2}$. 


\vspace{-2mm}
\section{System Model}
Consider a RDARS-assisted ISAC system, as shown in Fig. \ref{system_model}, where a BS serves $K$ users in the downlink while simultaneously sensing a point target with the assistance of a RDARS.
The BS is equipped with $M_{t}$ transmit antennas and $M_{r}$ receive antennas, both arranged in a uniform linear array (ULA) configuration. In particular, we assume $M_{t}=M_{r}=M$. Each user is equipped with a single antenna.

The RDARS is equipped with a uniform planar array (UPA) comprising a total of $N$ elements. Each element can function in either the connected mode or the reflection mode. The specific mode can be configured at the RDARS controller.
Let there be $a$ elements working in the connected mode and the remaining  $N-a$ elements working in the reflection mode. From the perspective of reducing the hardware cost, we assume that there are only a few elements working in the connected mode, i.e., $a\ll N$.
We consider dynamic RDARS, where the locations of connected elements can be configured flexibly according to different system performance requirements.
To quantify this adaptability, we introduce a diagonal matrix $\mathbf{A}=\mathrm{diag}\{a_{1},\cdots,a_{N}\}\in \mathbb{R}^{N\times N}$, named the selection matrix. Specifically, $a_{i}=\mathbf{A}(i,i)\in\{0,1\}$. $a_{i}=1$ signifies that the $i$th element operates in the connected mode, whereas $a_{i}=0$ signifies that the $i$th element operates in the reflection mode. 
Further, we define a reflection matrix $\boldsymbol{\Phi}=\mathrm{diag}\{\phi_{1},\cdots,\phi_{N}\}$, where $\phi_{i}=|\phi_{i}|e^{ j {\arg}( \phi_{i})}$ with $|\phi_{i}|=1$ and ${\arg}( \phi_{i})\in[0,2\pi)$ is the reflection coefficient of the $i$th element if it operates in the reflection mode. Thus, the equivalent reflection matrix of RDARS can be written as $(\mathbf{I}-\mathbf{A})\boldsymbol{\Phi}$.

\begin{figure}[t]
    \centering
    \includegraphics[width=3.3in]{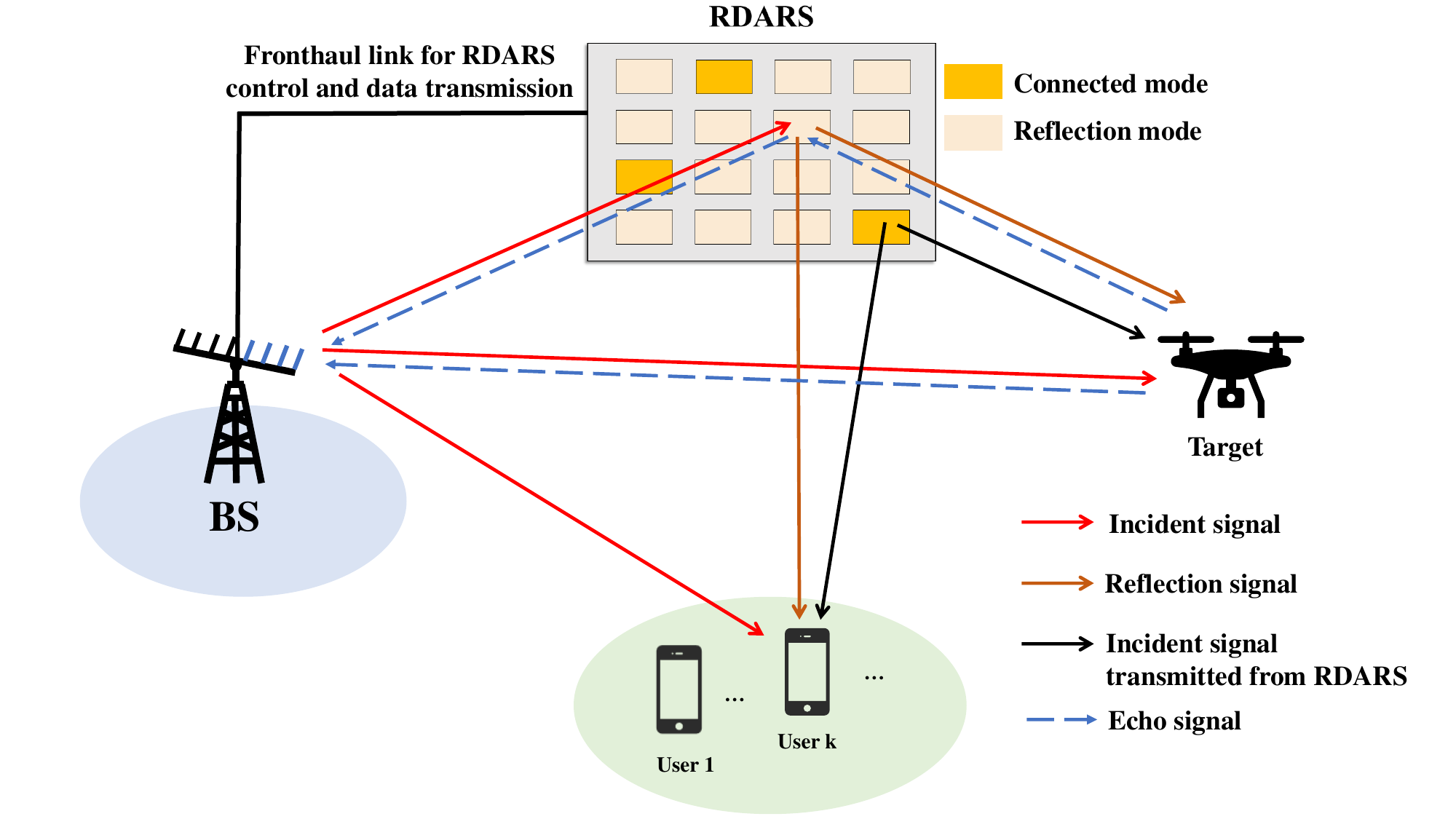}
    \vspace{-2mm}
    \caption{The system model of our RDARS-aided ISAC system.}
   \label{system_model}
    \end{figure} 
   
In the downlink communications, the BS and the connected elements of RDARS collaborate to serve users in a distributed antenna fashion.
Denote by $\mathbf{x}_{1}\in \mathbb{C}^{M} $ the signal transmitted from the BS and $\mathbf{x}_{2}\in \mathbb{C}^{a} $ the signal transmitted from the connected elements of RDARS. The combined transmitted signal can be written as
\begin{equation}\begin{aligned}
\mathbf{x}=\begin{bmatrix}\mathbf{x}_{1} \\\mathbf{x}_{2}\end{bmatrix}=\begin{bmatrix}\mathbf{F}_{1} \\\mathbf{F}_{2}\end{bmatrix}\mathbf{s}
\triangleq \mathbf{F}\mathbf{s},
\end{aligned}\end{equation}
where $\mathbf{s} \sim \mathcal {CN}(\mathbf{0}, {\mathbf{I}}_{K})$ is the transmit data symbol vector; $\mathbf{F}_{1}\in \mathbb{C}^{M\times K }$ and $\mathbf{F}_{2}\in \mathbb{C}^{a\times K }$ are the transmit beamforming matrices at the BS and the connected elements of RDARS, respectively; $\mathbf{F}$ is the compound transmit beamforming matrix.
\vspace{-5mm}    
\subsection{Communication Model} 
Denote by
$\mathbf {H}_{br}\in \mathbb {C}^{N\times M}$, 
$\mathbf {h}_{bu,k}\in \mathbb {C}^{M}$, and 
$\mathbf {h}_{ru,k}\in \mathbb {C}^{N}$ 
the channel between the BS and RDARS, 
the channel between the BS and user ${k}$, and 
the channel between the RDARS and user ${k}$, respectively.
We model the channel components in $\mathbf {H}_{br}$, $\mathbf {h}_{bu,k}$ and $\mathbf {h}_{ru,k}$ with a general Rician distribution. 
As an example, $\mathbf {H}_{br}$ can be written as
 \begin{equation}\begin{aligned}
 {\mathbf{H}_{br}} = \sqrt { \alpha_{\mathbf{H}_{br} }} (\sqrt {\kappa } \mathbf{H}_{br}^{{ LoS}} + \sqrt {1 - \kappa } \mathbf{H}_{br}^{{ NLoS}}),
 \label{H_br}\end{aligned}\end{equation} 
 where $\alpha_{\mathbf{H}_{br} }$ is the path loss coefficient and $\kappa$ is the Rician factor; $\mathbf{H}_{br}^{{ NLoS}}$ denotes the NLoS component, the element of which follows the Rayleigh distribution, i.e., $\mathbf{H}_{br}^{{ NLoS}}({i,j})\sim \mathcal {CN}({0},1)$; $\mathbf{H}_{br}^{{ LoS}}$ denotes the LoS component, which is given by
 \begin{equation}\begin{aligned}
&\mathbf{H}_{br}^{{ LoS}}=\mathbf{a}_{br2}(\theta_{br}^{A},\psi_{br}^{A})\mathbf{a}_{br1}^{T}(\theta_{br}^{D}),
 \end{aligned}\end{equation} 
 where $\mathbf{a}_{br1}(\theta_{br}^{D})$ is the transmit array response at the BS and $\mathbf{a}_{br2}(\theta_{br}^{A},\psi_{br}^{A})$ is the receive array response at the RDARS; $\theta_{br}^{D}$, $\theta_{br}^{A}$ and $\psi_{br}^{A}$ are the azimuth angle of departure (AoD) at the BS towards the RDARS, azimuth angle of arrival (AoA) at the RDARS, and elevation AoA at the RDARS, respectively.
The array response vectors for an $M$-elements ULA and a $N_{ 1}\times N_{2}$ UPA are respectively defined as
 \begin{align}
&\mathbf {a}_{\text{ULA}}(\theta)=\begin{bmatrix} 1, e^{-j\frac{2\pi}{\beta}d\sin(\theta)}, \cdots, e^{-j\frac{2\pi}{\beta}d(M-1)\sin(\theta)} 
\end{bmatrix}^{T}, 
\\ &\mathbf {a}_{\text{UPA}}(\theta,\psi)=[ 1, e^{-j\frac{2\pi}{\beta}d(\cos(\theta)\sin(\psi)+\sin(\psi))}, \cdots,\notag \\ & \quad \quad\quad\quad \quad\quad\quad e^{-j\frac{2\pi}{\beta}d((N_{1}-1)\cos(\theta)\sin(\psi)+(N_{2}-1)\sin(\psi))}
]^{T},
 \end{align}
 where $\theta$ and $\psi$ represent the azimuth AoD/AoA and elevation AoD/AoA, respectively; $\beta$ is the wavelength and $d$ denotes the spacing between two consecutive antenna elements.

 Based on the above settings, the received signal of the $k$th user can be written as
\begin{equation}\label{eq:yuk}
\begin{aligned}
{y}_{u,k}&=(\mathbf{h}_{bu,k}^{T}+\mathbf{h}_{ru,k}^{T}(\mathbf{I}-\mathbf{A})\boldsymbol{\Phi} \mathbf{H}_{br})\mathbf{x}_{1}
+\mathbf{h}_{ru,k}^{T}\mathbf{A}_{a}\mathbf{x}_{2}+{n}_{1,k}
\\ & ={\mathbf{h}}_{1,k}^{T}\mathbf{F}\mathbf{s}+{n}_{1,k},
\end{aligned}
\end{equation} 
where ${\mathbf{h}}_{1,k}=\begin{bmatrix}\mathbf{h}_{bu,k}^{T}+\mathbf{h}_{ru,k}^{T}(\mathbf{I}-\mathbf{A})\boldsymbol{\Phi} \mathbf{H}_{br}&\mathbf{h}_{ru,k}^{T}\mathbf{A}_{a}\end{bmatrix}^{T}$ is the composite channel from the BS to the $k$th user and  ${n}_{1,k} \sim \mathcal {CN}({0},\sigma _{1,k} ^{2} )$ is the additive white Gaussian noise (AWGN). 
It is worth noting that we introduce another selection matrix $\mathbf{A}_{a}\in \mathbb{R}^{N\times a }$ in \eqref{eq:yuk} to accommodate the different dimensions of $\mathbf{x}_{1}$ and $\mathbf{x}_{2}$.
Specifically, $\mathbf{A}_{a}\subseteq \mathbf{A}$ consists of all the columns of $\mathbf{A}$ that contain $1$ (i.e., the elements that operate in the connected mode).
Therefore, we have $\mathbf{A}=\mathbf{A}_{a}\mathbf{A}_{a}^{T}$. 
Notice that this communication model is general and can cover various scenarios by setting the channel parameters accordingly. For example, when the direct link between the BS and the user $k$ is blocked, the channel $\mathbf{h}_{bu,k}$ can be set to zero vector to model this blockage scenario, which is a widely discussed scenario for RIS-aided ISAC system \cite{10197455}.

The SINR of the $k$th user is given by
\begin{equation}\begin{aligned}
\gamma_{u,k}=\frac{|{\mathbf{h}}_{1,k}^{T}\mathbf{f}_{k}|^{2}}{\sum _{i \neq k}^{K}|{{\mathbf{h}}_{1,k}^{T}\mathbf{f}_{i}}|^{2}+{\sigma}_{1,k}^{2}} ,
\end{aligned}\end{equation} 
where $\mathbf{f}_{i}$ denotes the $i$th column of $\mathbf{F}$, $i=1,\cdots, K$.
\vspace{-3mm}
\subsection{Sensing Model}
In the RDARS-assisted ISAC system, communication signals reach the target in three different paths:
1) from the BS directly to the target,
2) from the BS to the RDARS reflection elements and then to the target,
and 3) from the RDARS connected elements to the target.
After being reflected by the target, these signals follow two distinct paths to the BS:
1) from the target directly to the BS,
2) from the target to the RDARS reflection elements and then to the BS.
Therefore, we can write the received echo signal at the BS as
\begin{equation}
\begin{aligned}
\mathbf{y}_{r}&\!=\!\alpha(\mathbf{h}_{bt}+\mathbf{H}_{br}^{{T}}(\mathbf{I}-\mathbf{A})\boldsymbol{\Phi} \mathbf{h}_{rt} )
[(\mathbf{h}_{bt}^{{T}}+\mathbf{h}_{rt}^{{T}}(\mathbf{I}-\mathbf{A})\boldsymbol{\Phi}\mathbf{H}_{br} )\mathbf{x}_{1}
\\&\quad+\mathbf{h}_{rt}^{{T}}\mathbf{A}_{a}\mathbf{x}_{2}]+\mathbf{n}_{2}
\\ &\!=\!\alpha{\mathbf{H}}_{2}\mathbf{F}\mathbf{s}+\mathbf{n}_{2},
\end{aligned}\end{equation} 
where $\alpha $ refers to the radar cross-section (RCS) with $\mathbb{E}\{|\alpha|^{2}\}=\sigma_{\alpha}^{2}$; $\mathbf{n}_{2} \sim \mathcal {CN}(\mathbf{0},\sigma _{2} ^{2} \mathbf{I}_{M})$ is the AWGN;
${\mathbf{H}}_{2}\triangleq(\mathbf{h}_{bt}+\mathbf{H}_{br}^{{T}}(\mathbf{I}-\mathbf{A})\boldsymbol{\Phi} \mathbf{h}_{rt} )
\allowbreak\begin{bmatrix}(\mathbf{h}_{bt}^{{T}}+\mathbf{h}_{rt}^{{T}}(\mathbf{I}-\mathbf{A})\boldsymbol{\Phi} \mathbf{H}_{br})&\mathbf{h}_{rt}^{{T}}\mathbf{A}_{a}\end{bmatrix}$ is the equivalent round-trip echo channel; $\mathbf{h}_{bt}\in \mathbb{C}^{M}$ and $\mathbf{h}_{rt}\in \mathbb{C}^{N }$ are the channels between the BS and target and that between the RDARS and target, respectively. $\mathbf{h}_{bt}$ and $\mathbf{h}_{rt}$ are respectively modeled as
\begin{equation}
 \mathbf {h}_{bt}=\sqrt { \alpha_{\mathbf{h}_{bt} }}\mathbf{a}_{bt}(\theta_{bt}^{D}),~~
\mathbf {h}_{rt}=\sqrt { \alpha_{\mathbf{h}_{rt} }}\mathbf{a}_{rt}(\theta_{rt}^{D},\psi_{rt}^{D}),
\end{equation}
where $\alpha_{\mathbf{h}_{bt}}$ and $\alpha_{\mathbf{h}_{rt} }$ are the path loss coefficients;
$\theta_{bt}^{D}$  is the azimuth AoD towards the target at the BS;
$\theta_{rt}^{D}$ and $\psi_{rt}^{D}$ are the azimuth AoD and elevation AoD towards the target at the RDARS, respectively.

Thanks to the availability of efficient channel estimation algorithms\cite{9854847,9366805,zhou2022channel}, we assume that full channel state information (CSI) is available at the BS. Upon receiving the echo signal, the BS applies the receive filter $\mathbf{w}\in \mathbb{C}^{M} $, yielding
\begin{equation}\begin{aligned}
  \mathbf{w}^{H}  { \mathbf{y}}_{r} =\alpha\mathbf{w}^{H}{\mathbf{H}}_{2}\mathbf{F}\mathbf{s}+\mathbf{w}^{H}{\mathbf{n}}_{2}.
\label{receive_beamformer}\end{aligned}\end{equation}
Therefore, the radar output SNR can be written as
\begin{equation}\begin{aligned}
\gamma_{t}=\frac{\sigma_{\alpha}^{2}\mathbf{w}^{H}{\mathbf{H}}_{2}\mathbf{F}\mathbf{F}^{H}{\mathbf{H}}_{2}^{H}\mathbf{w}}{{\sigma}_{2}^{2}\mathbf{w}^{H}\mathbf{w} }.
\end{aligned}\end{equation}

\vspace{-4mm} 
\subsection{Problem Formulation}
In this paper, our primary focus centers on the enhancement of the target detection probability while maintaining the multi-user communication performance. 
As the performance of target detection is closely related to the radar output SNR, we tackle this challenge by formulating a radar output SNR maximization problem under stringent communication requirements. This problem involves a joint optimization process that encompasses several key components, including the receive filter $\mathbf{w}$ at the BS, the compound transmit beamforming matrix $\mathbf{F}$, the reflection matrix $\boldsymbol{\Phi}$ of the RDARS, and the two related selection matrices $\mathbf{A}$ and $\mathbf{A}_{a}$. Specifically, the optimization problem is formulated as
\begin{subequations}\begin{align}\text {(P1)} \quad& \max _{ \mathbf{F}, \boldsymbol{\Phi},\mathbf{w},\mathbf{A}, \mathbf{A}_{a}}~\frac{\sigma_{\alpha}^{2}\mathbf{w}^{H}{\mathbf{H}}_{2}\mathbf{F}\mathbf{F}^{H}{\mathbf{H}}_{2}^{H}\mathbf{w}}{{\sigma}_{2}^{2}\mathbf{w}^{H}\mathbf{w} }
\\& {\qquad  ~\text {s.t.}~} 
     \gamma_{u,k}\geq \overline{\gamma}_{k}, \forall {k} ,\label{com_constraint}
      \\&\hphantom {\qquad  ~\text {s.t.}~} {\Vert\mathbf{F}\Vert}_{F}^{2} \leq P, \label{power_constraint}
     \\&\hphantom {\qquad  ~\text {s.t.}~} 
       \boldsymbol {\Phi }= \mathop {\mathrm {diag}}\nolimits \{\phi  _{1},\cdots,\phi _{N}\},\label{RIS_constraint1}
       \\&\hphantom {\qquad  ~\text {s.t.}~} |\phi  _{{i}}|=1, \quad i\in \{1,\cdots,N\},\label{RIS_constraint2}
       \\&\hphantom {\qquad  ~\text {s.t.}~} 
       \mathbf{A}=\mathbf{A}_{a}\mathbf{A}_{a}^{T},\label{A_constraint1}
       \\&\hphantom {\qquad  ~\text {s.t.}~} {\mathbf{A}}(i,i)\in\{0,1\}, \quad i\in \{1,\cdots,N\},\label{A_constraint2}
     \end{align}  \end{subequations} 
where \eqref{com_constraint} is the communication SINR constraint for each user and $\overline{\gamma}_{k}$ is the minimum SINR threshold for the $k$th user; 
\eqref{power_constraint} is the total power constraint; 
\eqref{RIS_constraint2} is the unit-modulus constraint for each RDARS element; 
\eqref{A_constraint1} and \eqref{A_constraint2} are the constraints for the two selection matrices. 
It is worth noting that problem (P1) is a non-convex problem, which is challenging to solve due to the following reasons.
First, the optimization variables are tightly coupled in the objective function and the objective function does not have an explicit form with respect to (w.r.t.) the variables $\boldsymbol{\Phi}$, $\mathbf{A}$, and $\mathbf{A}_{a}$, which further increases the solving difficulty.
Second, the unit-modulus constraint in \eqref{RIS_constraint2} is non-convex.
Third, \eqref{A_constraint2} is a non-convex binary integer constraint, which is NP-hard. Roughly speaking, the optimizations of the beamforming matrix F, the reflection matrix $\boldsymbol{\Phi}$ and the selection matrices would bring distribution gain, reflection gain and selection gain, respectively. This will be clearly demonstrated later in the simulation results.

\vspace{-2mm}
\section{Joint Beamforming And Mode Selection For RDARS-Aided ISAC}
This section presents our approach for designing joint beamforming and mode selection in the context of RDARS-aided ISAC.
To address the intricacies of the non-convex problem (P1), we introduce a streamlined algorithm that seamlessly integrates elements from the AO, penalty, and MM methods.
Our approach begins with AO, decoupling the complex original problem into a set of sub-problems that are amenable to analysis and optimization. Subsequently, we utilize both MM and penalty methods in an iterative fashion to tackle each of these sub-problems.

\vspace{-2mm} 
\subsection{Receive Filter Optimization}
We first derive the optimal receive filter $\mathbf{w}^{\star}$. For fixed $\mathbf{F}$, $\boldsymbol{\Phi}$, $\mathbf{A}$ and $\mathbf{A}_{a}$, the optimization problem w.r.t. the receive filter $\mathbf{w}$ can be written as
\begin{equation}\begin{aligned}\text {(P2)} &\quad \max _{ \mathbf{w} }~ 
\frac{\sigma_{\alpha}^{2}\mathbf{w}^{H}{\mathbf{H}}_{2}\mathbf{F}\mathbf{F}^{H}{\mathbf{H}}_{2}^{H}\mathbf{w}}{{\sigma}_{2}^{2}\mathbf{w}^{H}\mathbf{w} }.
 \label{optimize_w}  \end{aligned} \end{equation}
(P2) is a typical Rayleigh quotient problem, the optimal solution of which is the eigenvector corresponding to the largest eigenvalue of ${\mathbf{H}}_{2}\mathbf{F}\mathbf{F}^{H}{\mathbf{H}}_{2}^{H}$.

\vspace{-2mm} 
\subsection{Transmit Beamforming Matrix Optimization}
With given $\mathbf{w}$, $\boldsymbol{\Phi}$, $\mathbf{A}$ and $\mathbf{A}_{a}$, the optimization problem w.r.t the transmit beamforming matrix $\mathbf{F}$ is given by 
\begin{subequations}\begin{align}\text {(P3)} &\quad \max _{ \mathbf{F} }~ \frac{\sigma_{\alpha}^{2}\mathbf{w}^{H}{\mathbf{H}}_{2}\mathbf{F}\mathbf{F}^{H}{\mathbf{H}}_{2}^{H}\mathbf{w}}{{\sigma}_{2}^{2}\mathbf{w}^{H}\mathbf{w} }
\\& {\qquad  ~\text {s.t.}~} 
     \frac{|{\mathbf{h}}_{1,k}^{T}\mathbf{f}_{k}|^{2}}{\sum _{i \neq k}^{K}|{{\mathbf{h}}_{1,k}^{T}\mathbf{f}_{i}}|^{2}+{\sigma}_{1}^{2}}\geq \overline{\gamma}_{k}, \forall {k} ,\label{sinr}
      \\&\hphantom {\qquad  ~\text {s.t.}~} {\Vert\mathbf{F}\Vert}_{F}^{2} \leq P.
     \end{align}  \end{subequations} 
 
It can be found that the transmit beamforming matrix $\mathbf{F}$ has different forms in the objective function and constraint \eqref{sinr}, significantly complicating the problem-solving process.
To render the problem more manageable, we first undertake some transformations on both the objective function and the constraint \eqref{sinr}. Specifically, by using the properties $\mathrm{tr}(\mathbf{A}\mathbf{B}\mathbf{C}\mathbf{D})=\mathrm{tr}(\mathbf{C}\mathbf{D}\mathbf{A}\mathbf{B})$ and  $\mathrm{tr}(\mathbf{C}\mathbf{D}\mathbf{A}\mathbf{B})=\mathrm{vec}(\mathbf{C}^{H})^{H}(\mathbf{B}^{T}\otimes\mathbf{D})\mathrm{vec}(\mathbf{A})$, the objective function can be transformed to
\begin{equation}\begin{aligned}
\frac{\sigma_{\alpha}^{2}\mathbf{w}^{H}{\mathbf{H}}_{2}\mathbf{F}\mathbf{F}^{H}{\mathbf{H}}_{2}^{H}\mathbf{w}}{{\sigma}_{2}^{2}\mathbf{w}^{H}\mathbf{w} }=\mathbf{f}^{H}\mathbf{C}\mathbf{f},
\end{aligned}\end{equation}
where $\mathbf{f}\triangleq\mathrm{vec}(\mathbf{F})$ and $\mathbf{C}=\frac{{\sigma}_{2}^{2}\mathbf{I}_{K}\otimes({\mathbf{H}}_{2}^{H}\mathbf{w}\mathbf{w}^{H}{\mathbf{H}}_{2})}{{\sigma}_{2}^{2}\mathbf{w}^{H}\mathbf{w}}$.
Note that although $\mathbf{f}^{H}\mathbf{C}\mathbf{f}$ is a convex function w.r.t $\mathbf{f}$, maximizing a convex function itself is a non-convex problem. 
To navigate the complexity and arrive at a more manageable solution, we employ the MM technique to find a more tractable surrogate function. Specifically, based on the first-order Taylor expansion, we have \cite{sun2016majorization}
\begin{equation}\begin{aligned}
\mathbf{f}^{H}\mathbf{C}\mathbf{f}\geq \mathbf{f}_{t}^{H}\mathbf{C}\mathbf{f}_{t}+2\Re \{\mathbf{f}_{t}^{H}\mathbf{C}(\mathbf{f}-\mathbf{f}_{t})\},
\end{aligned}\end{equation}
where $\mathbf{f}_{t}$ is the optimal solution of $\mathbf{f}$ at the $t$th iteration. Since $\mathbf{f}_{t}^{H}\mathbf{C}\mathbf{f}_{t}$ is the known term, we just need to optimize $\Re \{\mathbf{f}_{t}^{H}\mathbf{C}\mathbf{f}\}$, which is a linear function.
For constraint \eqref{sinr}, we first express $\mathbf{f}_{i}$ as $\mathbf{f}_{i}=\Pi_{j}\mathbf{f}$, where $\Pi_{j} \in\mathbb{C}^{(M+a)\times K(M+a)}$ is the permutation matrix. Then, constraint \eqref{sinr} can be rewritten as 
\begin{equation}\begin{aligned}
\sqrt{1+\overline{\gamma}_{k}}{\mathbf{h}}_{1,k}^{T}\Pi_{j}\mathbf{f}\geq \sqrt{\overline{\gamma}_{k}}\lVert[\mathbf{B}_{k}\mathbf{f}, \sigma_{1}]\rVert,
\label{SOC}\end{aligned}\end{equation}
where $\mathbf{B}_{k}=\mathbf{I}_{K}\otimes \mathbf{h}_{1,k}^{T}$. The constraint \eqref{SOC} is a convex second-order cone (SOC) constraint. Based on the above transformations, problem (P3) can be reformulated as 
\begin{subequations}\begin{align}\label{optimize_f}\text {(P4)} &\quad \max _{ \mathbf{f} }~ \Re \{\mathbf{f}_{t}^{H}\mathbf{C}\mathbf{f}\}
\\& {\qquad  ~\text {s.t.}~} 
    \sqrt{1+\overline{\gamma}_{k}}|{\mathbf{h}}_{1,k}^{T}\Pi_{k}\mathbf{f}|\geq \sqrt{\overline{\gamma}_{k}}\lVert[\mathbf{B}_{k}\mathbf{f}, \sigma_{1}]\rVert, \forall {k} ,
      \\&\hphantom {\qquad  ~\text {s.t.}~} {\Vert\mathbf{f}\Vert}^{2} \leq P,
    \end{align}  \end{subequations} 
which is a second-order cone programming (SOCP) problem and can be solved by the existing efficient convex toolboxes, such as CVX\cite{CVX}.

\vspace{-2mm} 
\subsection{Reflection Matrix and Selection Matrices Optimization}
The optimization variables $\boldsymbol{\Phi}$ and $\mathbf{A}$ share many similarities. Specifically, $\boldsymbol{\Phi}$ and $\mathbf{A}$ are both diagonal matrices and for each element on the diagonal, the modulus is always one. In addition, for  $\boldsymbol{\Phi}$ and $\mathbf{A}$, the equations $(\mathrm{diag}(\boldsymbol{\Phi}))^{H}\mathrm{diag}(\boldsymbol{\Phi})=N$ and $(\mathrm{diag}(\mathbf{A}))^{T}\mathrm{diag}(\mathbf{A})=a$ hold due to the unit-modulus property. Though $\mathbf{A}_{a}$ is not a square matrix, it still has the similar properties
as $\boldsymbol{\Phi}$ and $\mathbf{A}$, such as $(\mathrm{vec}(\mathbf{A}_{a}))^{T}\mathrm{vec}(\mathbf{A}_{a})=a$. Therefore, we shall optimize the three optimization variables together.

After obtaining $\mathbf{w}$ and $\mathbf{f}$, the optimization problem w.r.t. the reflection matrix $\boldsymbol{\Phi}$ and the two selection matrices $\mathbf{A}$ and $\mathbf{A}_{a}$ is formulated as
\begin{subequations}\begin{align}\text {(P5)} &\quad \max _{   \boldsymbol{\Phi}, \mathbf{A} ,\mathbf{A}_{a} }~ \frac{\sigma_{\alpha}^{2}\mathbf{w}^{H}{\mathbf{H}}_{2}\mathbf{F}\mathbf{F}^{H}{\mathbf{H}}_{2}^{H}\mathbf{w}}{{\sigma}_{2}^{2}\mathbf{w}^{H}\mathbf{w} }
\\& {\qquad  ~\text {s.t.}~} 
     \frac{|{\mathbf{h}}_{1,k}^{T}\mathbf{f}_{k}|^{2}}{\sum _{i \neq k}^{K}|{{\mathbf{h}}_{1,k}^{T}\mathbf{f}_{i}}|^{2}+{\sigma}_{1}^{2}}\geq \overline{\gamma}_{k}, \forall {k}, \label{14b}
     \\&\hphantom {\qquad  ~\text {s.t.}~} 
       \boldsymbol {\Phi }= \mathop {\mathrm {diag}}\nolimits \{\phi  _{1},\cdots,\phi _{N}\},\label{16c}
       \\&\hphantom {\qquad  ~\text {s.t.}~} |\phi  _{{i}}|=1, \quad i\in \{1,\cdots,N\},\label{16d}
       \\&\hphantom {\qquad  ~\text {s.t.}~} \mathbf{A}=\mathbf{A}_{a}\mathbf{A}_{a}^{T},\label{16e}
       \\&\hphantom {\qquad  ~\text {s.t.}~} {\mathbf{A}}(i,i)\in\{0,1\}, \quad i\in \{1,\cdots,N\}.\label{16f}
     \end{align}  \end{subequations} 
Dealing directly with this sub-problem remains a challenging task. The optimization variables in the objective function, as well as constraints \eqref{14b} and \eqref{16e}, are intricately intertwined. On the other hand, constraints \eqref{16c}, \eqref{16d}, and \eqref{16f} operate independently.
To create a unified framework, we employ the penalty technique to consolidate constraints \eqref{14b} and \eqref{16e} into penalty terms. To achieve this, we introduce auxiliary variables $s_{k,i},\forall{k}, {i}$ to reconfigure constraint \eqref{14b} as 
\begin{subequations}\begin{align}
&\frac{|s_{k,k}|^{2}}{\sum _{i \neq k}^{K}|s_{k,i}|^{2}+{\sigma}_{1}^{2}}\geq \overline{\gamma}_{k}, \forall {k}, 
\\&{\mathbf{h}}_{1,k}^{T}\mathbf{f}_{i}=s_{k,i},\forall{k},\forall{i}.\label{17b}
\end{align}  \end{subequations} 
By adding the equality constraints \eqref{16e} and \eqref{17b} to the objective function, the corresponding optimization problem can be rewritten as
\begin{subequations}\begin{align}\text {(P6)} &\quad \min _{  \boldsymbol{\Phi}, \mathbf{A} ,\mathbf{A}_{a},s_{k,i}}~ -\frac{\sigma_{\alpha}^{2}\mathbf{w}^{H}{\mathbf{H}}_{2}\mathbf{F}\mathbf{F}^{H}{\mathbf{H}}_{2}^{H}\mathbf{w}}{{\sigma}_{2}^{2}\mathbf{w}^{H}\mathbf{w} }+
\notag\\&\qquad\qquad\qquad \frac{1}{2\rho_{1}}\sum _{k =1}^{K}\sum _{i=1}^{K}|{\mathbf{h}}_{1,k}^{T}\mathbf{f}_{i}-s_{k,i}|^{2}\notag\\&\qquad \qquad\qquad+\frac{1}{2\rho_{2}}\lVert  \mathbf{A}-\mathbf{A}_{a}\mathbf{A}_{a}^{T}\rVert_{F}^{2}  \label{obj_lag}
 \\& {\qquad  ~\text {s.t.}~} 
    \frac{|s_{k,k}|^{2}}{\sum _{i \neq k}^{K}|s_{k,i}|^{2}+{\sigma}_{1}^{2}}\geq \overline{\gamma}_{k}, \forall {k}, 
     \\&\hphantom {\qquad  ~\text {s.t.}~} 
       \boldsymbol {\Phi }= \mathop {\mathrm {diag}}\nolimits \{\phi  _{1},\cdots,\phi _{N}\},
       \\&\hphantom {\qquad  ~\text {s.t.}~} |\phi  _{{i}}|=1, \quad i\in \{1,\cdots,N\},
       \\&\hphantom {\qquad  ~\text {s.t.}~} {\mathbf{A}}(i,i)\in\{0,1\}, \quad i\in \{1,\cdots,N\},
     \end{align}  \end{subequations} 
where $\rho_{1}>0 $ and $\rho_{2}>0$ are the penalty coefficients, which are used to penalize the violation of the equality constraints \eqref{16e} and \eqref{17b}. By gradually decreasing the value of $\rho_{1}$ and $\rho_{2}$ ( $\frac{1}{\rho_{1}}\rightarrow \infty$ and $\frac{1}{\rho_{2}}\rightarrow \infty$), the obtained optimal solution by solving problem (P6) is guaranteed to satisfy the equality constraints \eqref{16e} and \eqref{17b}. In general, it is desirable to initialize $\rho_{1}$ and $\rho_{2}$ to be a slightly larger number, such that the penalized objective function is dominated by the original objective function rather than the penalty terms\cite{9133435}. 

With fixed $\rho_{1}$ and $\rho_{2}$, problem (P6) is still non-convex. However, based on the above transformations, for each optimization variable, though it is coupled with other optimization variables in the objective function, the involved constraint is independent. With this observation, we propose to employ the block coordinate descent (BCD) method to problem (P6) by iteratively solving each block while fixing other blocks. To make the objective function more explicit, we first prove the following lemma and use it to express the objective function w.r.t. $\boldsymbol{\Phi}$, $\mathbf{A}$ and $\mathbf{A}_{a}$, respectively.

\begin{lemma} 
 By defining $\boldsymbol{\phi}\!=\![\phi  _{1},\cdots,\phi _{N}]^{T}$, $\mathbf{a}\!=\![a_{1}, \cdots,a_{N}]^{T}$ and $ \mathbf{a}_{a}=[\mathbf{a}_{a1}^{T},\cdots,\mathbf{a}_{aa}^{T}]^{T}$, where $\mathbf{a}_{ai}$ is the $i$th column of $\mathbf{A}_{a}$, the objective function can be explicitly expressed w.r.t. $\boldsymbol{\phi}$, $\mathbf{a}$ and $\mathbf{a}_{a}$ as
\begin{subequations}\begin{align}
f_{\mathrm{obj},\boldsymbol{\phi}}=&-({r}_{1}+2\Re\{\mathbf{r}_{2}^{H}\boldsymbol{\phi}\}+2\Re\{\mathbf{r}_{3}^{H}(\boldsymbol{\phi}\otimes\boldsymbol{\phi})\}
+\boldsymbol{\phi}^{H}\mathbf{R}_{4}\boldsymbol{\phi}\notag
\\& +2\Re\{(\boldsymbol{\phi}\otimes\boldsymbol{\phi})^{H}\mathbf{R}_{5}\boldsymbol{\phi}\}
+(\boldsymbol{\phi}\otimes\boldsymbol{\phi})^{H}\mathbf{R}_{6}(\boldsymbol{\phi}\otimes\boldsymbol{\phi}))
\notag\\& +\frac{1}{2\rho_{1}}(r_{7}+2\Re\{\mathbf{r}_{8}^{H}\boldsymbol{\phi}\}+\boldsymbol{\phi}^{T}\mathbf{R}_{9}\boldsymbol{\phi}^{\ast}),
\label{objective_function1}
\\
f_{\mathrm{obj},\mathbf{a}}=&-(r_{10}+2\Re\{\mathbf{r}_{11}^{H}\mathbf{a}\}
+2\Re\{\mathbf{r}_{12}^{H}(\mathbf{a}\!\otimes\!\mathbf{a})\}
+\mathbf{a}^{T}\mathbf{R}_{13}\mathbf{a}\notag
\\& 
+2\Re\{(\mathbf{a}\!\otimes\!\mathbf{a})^{T}\mathbf{R}_{14}\mathbf{a}\}
+(\mathbf{a}\!\otimes\!\mathbf{a})^{T}\mathbf{R}_{15}(\mathbf{a}\!\otimes\!\mathbf{a}))\notag
\!+\!\frac{1}{2\rho_{1}}\\&(r_{16}+2\Re\{\mathbf{r}_{17}^{H}\mathbf{a}\}+\mathbf{a}^{T}\mathbf{R}_{18}\mathbf{a})+\frac{1}{2\rho_{2}}(\mathbf{r}_{19}^{T}\mathbf{a}+r_{20}),
\label{objective_function2}
\\
f_{\mathrm{obj},\mathbf{a}_{a}}=&-(r_{21}+2\Re\{{\mathbf{r}}_{22}^{H}\mathbf{a}_{a}+\mathbf{a}_{a}^{T}{\mathbf{R}}_{23}\mathbf{a}_{a}\})+\frac{1}{2\rho_{1}}(r_{24}+\notag
\\&2\Re\{\mathbf{r}_{25}^{H}\mathbf{a}_{a}\}+\mathbf{a}_{a}^{T}\mathbf{R}_{26}\mathbf{a}_{a})\!+\!\frac{1}{\rho_{2}}(a-\mathbf{a}_{a}^{T}(\mathbf{I}_{a}\otimes\mathbf{A})\mathbf{a}_{a}),\label{objective_function3}
\end{align}\end{subequations}
 respectively, where ${r}_{1}$, $\mathbf{r}_{2}$, $\mathbf{r}_{3}$, $\mathbf{R}_{4},\cdots, {r}_{24}$, $\mathbf{r}_{25}$ and $\mathbf{R}_{26}$  are constants. In particular, ${r}_{1}$, $\mathbf{r}_{2}$, $\mathbf{r}_{3}$, $\mathbf{R}_{4},\cdots, \mathbf{R}_{6}$, ${r}_{7}$, $\mathbf{r}_{8}$, $\mathbf{R}_{9}$, $\mathbf{r}_{19}$ and ${r}_{20}$, are defined in \eqref{r1_r2}, \eqref{r3_r4}, \eqref{r5_r6}, \eqref{r7}, \eqref{r8}, \eqref{r9} and \eqref{r19_r20}, respectively, and the other constants share the similar form with ${r}_{1}$, $\mathbf{r}_{2}$, $\mathbf{r}_{3}$, $\mathbf{R}_{4},\cdots, \mathbf{R}_{6}$, ${r}_{7}$, $\mathbf{r}_{8}$ and $\mathbf{R}_{9}$, thus the explicit forms of which are omitted.

\end{lemma}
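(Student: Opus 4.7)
The plan is to establish \eqref{objective_function1}, \eqref{objective_function2}, and \eqref{objective_function3} by direct substitution into the objective of (P6) followed by algebraic expansion, exploiting the rank-one structure $\mathbf{H}_2=\mathbf{u}\mathbf{v}^T$ together with two elementary identities: $\boldsymbol{\Phi}\mathbf{v}=\mathrm{diag}(\mathbf{v})\boldsymbol{\phi}$, which linearizes any product of $\boldsymbol{\Phi}$ against a fixed vector, and $(\mathbf{I}-\mathbf{A})\mathbf{u}=\mathbf{u}-\mathrm{diag}(\mathbf{u})\mathbf{a}$, its analog for $\mathbf{a}$. Together these show that every channel vector of interest is \emph{affine} in the variable currently being optimized once the other two are frozen, which is the core mechanism capping each expression at the degrees shown.

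First fix $\mathbf{a}$ and $\mathbf{a}_a$ and address $\boldsymbol{\phi}$. Setting $\mathbf{u}\triangleq\mathbf{h}_{bt}+\mathbf{H}_{br}^T(\mathbf{I}-\mathbf{A})\boldsymbol{\Phi}\mathbf{h}_{rt}$ and $\mathbf{v}^T\triangleq\bigl[\mathbf{h}_{bt}^T+\mathbf{h}_{rt}^T(\mathbf{I}-\mathbf{A})\boldsymbol{\Phi}\mathbf{H}_{br},\;\mathbf{h}_{rt}^T\mathbf{A}_a\bigr]$, the rank-one factorization gives
\begin{equation}
\mathbf{w}^H\mathbf{H}_2\mathbf{F}\mathbf{F}^H\mathbf{H}_2^H\mathbf{w}=|\mathbf{w}^H\mathbf{u}|^2\,\bigl(\mathbf{v}^T\mathbf{F}\mathbf{F}^H\mathbf{v}^*\bigr),
\end{equation}
so the radar term is a product of two Hermitian quadratic forms in $\boldsymbol{\phi}$, hence a quartic. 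Cross-products of the constant, linear and quadratic pieces of the two factors can then be lifted to the Kronecker variable $\boldsymbol{\phi}\otimes\boldsymbol{\phi}$ via the identities
\begin{equation}
(\boldsymbol{\phi}^H\mathbf{M}\boldsymbol{\phi})(\boldsymbol{\phi}^H\mathbf{y})=(\boldsymbol{\phi}\otimes\boldsymbol{\phi})^H(\mathbf{M}\otimes\mathbf{y})\boldsymbol{\phi}
\end{equation}
and $(\boldsymbol{\phi}^H\mathbf{M}\boldsymbol{\phi})(\boldsymbol{\phi}^H\mathbf{N}\boldsymbol{\phi})=(\boldsymbol{\phi}\otimes\boldsymbol{\phi})^H(\mathbf{M}\otimes\mathbf{N})(\boldsymbol{\phi}\otimes\boldsymbol{\phi})$, producing exactly the six monomial types in \eqref{objective_function1}. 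The communication penalty is simpler: $\mathbf{h}_{1,k}^T\mathbf{f}_i$ is itself affine in $\boldsymbol{\phi}$, so $\sum_{k,i}|\mathbf{h}_{1,k}^T\mathbf{f}_i-s_{k,i}|^2$ collapses to a single form $r_7+2\Re\{\mathbf{r}_8^H\boldsymbol{\phi}\}+\boldsymbol{\phi}^T\mathbf{R}_9\boldsymbol{\phi}^{*}$ weighted by $1/(2\rho_1)$.

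For $\mathbf{a}$, the rewrite $(\mathbf{I}-\mathbf{A})\boldsymbol{\Phi}\mathbf{h}_{rt}=\boldsymbol{\Phi}\mathbf{h}_{rt}-\mathrm{diag}(\boldsymbol{\Phi}\mathbf{h}_{rt})\mathbf{a}$ makes $\mathbf{u}$ and the first block of $\mathbf{v}^T$ affine in $\mathbf{a}$, while the second block $\mathbf{h}_{rt}^T\mathbf{A}_a$ is independent of $\mathbf{a}$. The same quadratic-times-quadratic argument reproduces the first six monomial types in \eqref{objective_function2}, and the communication penalty contributes the Hermitian form $(r_{16},\mathbf{r}_{17},\mathbf{R}_{18})$. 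The new ingredient is the selection penalty $\frac{1}{2\rho_2}\|\mathbf{A}-\mathbf{A}_a\mathbf{A}_a^T\|_F^2$: expanding the Frobenius norm yields $\mathbf{a}^T\mathbf{a}-2\sum_i a_i(\mathbf{A}_a\mathbf{A}_a^T)_{ii}+\|\mathbf{A}_a\mathbf{A}_a^T\|_F^2$, and because constraint \eqref{16f} forces $a_i^2=a_i$, the quadratic $\mathbf{a}^T\mathbf{a}$ collapses to the linear $\mathbf{1}^T\mathbf{a}$, leaving the promised $\frac{1}{2\rho_2}(\mathbf{r}_{19}^T\mathbf{a}+r_{20})$ with $(\mathbf{r}_{19})_i=1-2(\mathbf{A}_a\mathbf{A}_a^T)_{ii}$.

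For $\mathbf{a}_a$, use $\mathbf{h}_{rt}^T\mathbf{A}_a=\mathbf{a}_a^T(\mathbf{I}_a\otimes\mathbf{h}_{rt})$ (and its analog for $\mathbf{h}_{ru,k}^T\mathbf{A}_a$) to linearize the connected-mode contribution. Since $\mathbf{u}$ does not depend on $\mathbf{a}_a$, the scalar prefactor $|\mathbf{w}^H\mathbf{u}|^2$ is constant in $\mathbf{a}_a$ and multiplies the Hermitian quadratic $\mathbf{v}^T\mathbf{F}\mathbf{F}^H\mathbf{v}^*$ to give $r_{21}+2\Re\{\mathbf{r}_{22}^H\mathbf{a}_a\}+\mathbf{a}_a^T\mathbf{R}_{23}\mathbf{a}_a$; the communication-penalty piece is quadratic by the same token. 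Finally, the identity $\mathrm{tr}(\mathbf{A}\mathbf{A}_a\mathbf{A}_a^T)=\sum_i\mathbf{a}_{ai}^T\mathbf{A}\mathbf{a}_{ai}=\mathbf{a}_a^T(\mathbf{I}_a\otimes\mathbf{A})\mathbf{a}_a$, combined with $\mathrm{tr}(\mathbf{A}^2)=a$ and $\|\mathbf{A}_a\mathbf{A}_a^T\|_F^2=a$ (which holds whenever $\mathbf{A}_a$ realizes a valid selection, i.e.\ has orthonormal columns), yields precisely $\frac{1}{\rho_2}(a-\mathbf{a}_a^T(\mathbf{I}_a\otimes\mathbf{A})\mathbf{a}_a)$. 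The main obstacle is purely bookkeeping: tracking how the Kronecker-product lifting reshuffles the six quartic pieces into the specific coefficient matrices $\mathbf{R}_4,\mathbf{R}_5,\mathbf{R}_6$ (and the analogues in $\mathbf{a}$), and correctly attributing each cross-term of $|\mathbf{w}^H\mathbf{u}|^2\cdot(\mathbf{v}^T\mathbf{F}\mathbf{F}^H\mathbf{v}^*)$ to its monomial bucket without double-counting.
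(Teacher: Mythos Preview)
Your proposal is correct and follows essentially the same route as the paper's proof: both exploit the diagonal identities $\boldsymbol{\Phi}\mathbf{v}=\mathrm{diag}(\mathbf{v})\boldsymbol{\phi}$ and $(\mathbf{I}-\mathbf{A})\mathbf{u}=\mathbf{u}-\mathrm{diag}(\mathbf{u})\mathbf{a}$ together with the rank-one structure of $\mathbf{H}_2$ to cap the radar term at degree four, then lift to $\boldsymbol{\phi}\otimes\boldsymbol{\phi}$ via Kronecker identities, and handle the two penalty terms by direct trace expansion. The only organizational difference is that the paper expands the scalar $\mathbf{w}^H\mathbf{H}_2\mathbf{f}_k$ column-by-column as $x_{1,k}+\mathbf{Y}_{1,k}\boldsymbol{\phi}+\mathbf{Z}_{1,k}(\boldsymbol{\phi}\otimes\boldsymbol{\phi})$ and then sums $\sum_k|\cdot|^2$, whereas you factor the whole radar term upfront as $|\mathbf{w}^H\mathbf{u}|^2\cdot(\mathbf{v}^T\mathbf{F}\mathbf{F}^H\mathbf{v}^*)$; the two bookkeeping schemes are equivalent and yield the same coefficient matrices.
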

\begin{proof} 
See Appendix \ref{sec:AppA}.
\end{proof} 

To solve problem (P6), we can iteratively update the following four blocks.

\subsubsection{Update \texorpdfstring{$s_{k,i}$}{\texttwoinferior}} With fixed $\boldsymbol{\phi}$, $\mathbf{a}$ and $\mathbf{a}_{a}$, the sub-problem w.r.t. $s_{k,i}$ is given by
\begin{subequations}\begin{align}\text {(P7)} &\quad \min _{  s_{k,i}}~ \sum _{k =1}^{K}\sum _{i=1}^{K}|{\mathbf{h}}_{1,k}^{T}\mathbf{f}_{i}-s_{k,i}|^{2}
\\& {\quad  ~\text {s.t.}~} 
    \frac{|s_{k,k}|^{2}}{\sum _{i \neq k}^{K}|s_{k,i}|^{2}+{\sigma}_{1,k}^{2}}\geq \overline{\gamma}_{k}, \forall {k}. 
     \end{align}
\end{subequations} 

The objective function comprises a total of $K$ terms, with each of these $K$ terms being independent from one another.
Furthermore, for each $k$, the variables $\{ s_{k,i},  \forall {i}\}$ are separable within the constraints.
Hence, we have the flexibility to tackle $K$ sub-problems independently and in parallel. Specifically, the sub-problem corresponding to the $k$th block $\{ s_{k,i},  \forall {i}\}$ is given by
\vspace{-2mm}
\begin{subequations}\begin{align}\text {(P8)} &\quad \min _{ s_{k,i}}~ \sum _{i=1}^{K}|{\mathbf{h}}_{1,k}^{T}\mathbf{f}_{i}-s_{k,i}|^{2}
\\ & \quad{ ~\text {s.t.}~}\frac{|s_{k,k}|^{2}}{\sum _{i \neq k}^{K}|s_{k,i}|^{2}+{\sigma}_{1,k}^{2}}\geq \overline{\gamma}_{k}.\label{22b}\end{align}  \end{subequations} 
Problem (P8) is a non-convex  quadratically constrained quadratic
programming (QCQP) problem, which is difficult to tackle. However, \cite{boyd_vandenberghe_2004} shows that  the strong duality will hold for problem (P8) when the Slater's condition requirement is met. If so, the duality gap between problem (P8) and its dual problem becomes zero. Therefore, the Lagrangian dual method can be utilized to get the solution.
By introducing the dual variable  $\mu_{k}\geq0$, the Lagrange function of problem (P8) is given by
\begin{equation}\begin{aligned}
&\mathcal{L}(s_{k,i},\mu_{k})\\&=\sum _{i=1}^{K}\!|{\mathbf{h}}_{1,k}^{T}\mathbf{f}_{i}-s_{k,i}|^{2}\!+\!\mu_{k}(\overline{\gamma}_{k}(\sum _{i \neq k}^{K}\!|s_{k,i}|^{2}\!+\!{\sigma}_{1,k}^{2})\!-\!|s_{k,k}|^{2}).
\end{aligned}\end{equation}
Then the  dual function of problem (P8) can  be obtained as $f_{dual}(\mu_{k})=\min\limits_{ s_{k,i}}~ \mathcal{L}(s_{k,i},\mu_{k})$. It should be noted that to make $f_{dual}(\mu_{k})$ bounded, $\mu_{k}$ should satisfy $0\leq\mu_{k}<1$. The proof is omitted here and the interested readers can refer to \cite{9913311} for more details.
Taking the first-order derivative of $f_{dual}(\mu_{k})$ to zero, we can obtain the optimal $s_{k,i}^{\star}$ as
\begin{equation} 
\left\{  
             \begin{aligned}  
            & s_{k,i}^{\star}=\frac{{\mathbf{h}}_{1,k}^{T}\mathbf{f}_{i}}{1+\mu_{k}\overline{\gamma}_{k}}, i\neq k, \forall {i}, \\  
            & s_{k,i}^{\star}=\frac{{\mathbf{h}}_{1,k}^{T}\mathbf{f}_{k}}{1-\mu_{k}}, i=k.
             \end{aligned}  \right. 
\label{opt_ski}\end{equation} 
Next, we aim to find the optimal dual variable  $\mu_{k}^{\star}$. For optimal $s_{k,i}^{\star}$  and $\mu_{k}^{\star}$, the following Slater's condition should be satisfied.
\begin{equation}\begin{aligned}
\mu_{k}^{\star}(\overline{\gamma}_{k}(\sum _{i \neq k}^{K}|s_{k,i}^{\star}(\mu_{k}^{\star})|^{2}+{\sigma}_{1,k}^{2})-|s_{k,k}^{\star}(\mu_{k}^{\star})|^{2})=0.
\end{aligned}\end{equation}
If constraint \eqref{22b} strictly holds, i.e., 
\begin{equation}\begin{aligned}
\overline{\gamma}_{k}(\sum _{i \neq k}^{K}|s_{k,i}^{\star}(\mu_{k}^{\star})|^{2}+{\sigma}_{1,k}^{2})-|s_{k,k}^{\star}(\mu_{k}^{\star})|^{2})<0,
\end{aligned}\end{equation}
we have $\mu_{k}^{\star}=0$ and the optimal $s_{k,i}^{\star}={\mathbf{h}}_{1,k}^{T}\mathbf{f}_{i}, \forall {i}, \forall {k}$, otherwise, we have $0<\mu_{k}^{\star}<1$. Substituting \eqref{opt_ski} into \eqref{22b}, the equality constraint is given by
\begin{equation}\begin{aligned}
\overline{\gamma}_{k}(\sum _{i \neq k}^{K}|\frac{{\mathbf{h}}_{1,k}^{T}\mathbf{f}_{i}}{1+\mu_{k}^{\star}\overline{\gamma}_{k}}|^{2}+{\sigma}_{1,k}^{2})-|\frac{{\mathbf{h}}_{1,k}^{T}\mathbf{f}_{k}}{1-\mu_{k}^{\star}}|^{2}=0.
\end{aligned}\end{equation}
By defining $f_{\mu_{k}}(\mu_{k})\triangleq\overline{\gamma}_{k}(\sum _{i \neq k}^{K}|\frac{{\mathbf{h}}_{1,k}^{T}\mathbf{f}_{i}}{1+\mu_{k}\overline{\gamma}_{k}}|^{2}+{\sigma}_{1,k}^{2})-|\frac{{\mathbf{h}}_{1,k}^{T}\mathbf{f}_{k}}{1-\mu_{k}}|^{2}$, we can easily see that $f_{\mu_{k}}(\mu_{k})$ is a monotonically
decreasing function w.r.t $\mu_{k}$ for $0<\mu_{k}<1$. Hence, the optimal $\mu_{k}^{\star}$ can be efficiently obtained by the one-dimension search technique, such as the bisection search \cite{shi2011iteratively}.

\subsubsection{Update \texorpdfstring{$\boldsymbol{\phi}$}{\texttwoinferior}}  With fixed $s_{k,i}$, $\mathbf{a}$ and $\mathbf{a}_{a}$, the sub-problem w.r.t. $\boldsymbol{\phi}$ is formulated as
\begin{subequations}\begin{align}\text {(P9)} &\quad \min _{  \boldsymbol{\phi} }~ f_{\mathrm{obj},\boldsymbol{\phi}}\label{obj_phi}
\\& {\qquad  ~\text {s.t.}~} 
    |\phi  _{{i}}|=1, \quad i\in \{1,\cdots,N\}.
     \end{align}  \end{subequations} 
It can be readily found that the highest order of the optimization variable $\boldsymbol{\phi}$ is up to four and the objective function \eqref{obj_phi}  involves the  Kronecker product operations, which makes the objective function \eqref{obj_phi} so complicated that we can't directly solve problem (P9). In the following, we first employ the MM technique to find a more tractable surrogate function.
Define $\tilde{\boldsymbol{\phi}}\triangleq[\boldsymbol{\phi}^{T},\boldsymbol{\phi}^{T}\otimes\boldsymbol{\phi}^{T}]^{T}$ and $\tilde{\mathbf{R}}\triangleq\begin{bmatrix}\mathbf{R}_{4}&\mathbf{R}_{5}^{H}\\\mathbf{R}_{5}&\mathbf{R}_{6}\end{bmatrix}$. According to the definitions of $\mathbf{R}_{4}$, $\mathbf{R}_{5}$ and $\mathbf{R}_{6}$, we can easily observe that $\tilde{\mathbf{R}}$ is a positive semidefinite matrix. Assuming that the optimal solution at the $t$th iteration is $\tilde{\boldsymbol{\phi}}_{t}\triangleq[\boldsymbol{\phi}_{t}^{T},\boldsymbol{\phi}_{t}^{T}\otimes\boldsymbol{\phi}_{t}^{T}]^{T}$, then based on the first-order Taylor expansion, an upper bound of the term $-\tilde{ \boldsymbol {\phi}}^{\mathrm{H}} \tilde{ \mathbf {R}}\tilde { \boldsymbol {\phi}}$ can be derived as 
\begin{equation} -\tilde{ \boldsymbol {\phi}}^{\mathrm{H}} \tilde{ \mathbf {R}}\tilde { \boldsymbol {\phi}}\leq -2{\Re}\{\tilde{\boldsymbol {\phi}}_{t}^{\mathrm{H}}\tilde{ \mathbf {R}}\tilde{\boldsymbol {\phi}}\}+\tilde{\boldsymbol {\phi}}_{t}^{\mathrm{H}}\tilde{ \mathbf {R}}{\tilde{\boldsymbol {\phi}}}_{t}.\end{equation}
 Since the term $\tilde{\boldsymbol {\phi}}_{t}^{\mathrm{H}}\tilde{ \mathbf {R}}{\tilde{\boldsymbol {\phi}}}_{t}$ is irrelevant to $\boldsymbol {\phi}$, we only need to process the term $-2{\Re}\{\tilde{\boldsymbol {\phi}}_{t}^{\mathrm{H}}\tilde{ \mathbf {R}}\tilde{\boldsymbol {\phi}}\}$. Combining the term $-2{\Re}\{\tilde{\boldsymbol {\phi}}_{t}^{\mathrm{H}}\tilde{ \mathbf {R}}\tilde{\boldsymbol {\phi}}\}$ with the second and the third term of the objective function \eqref{obj_phi}, we have 
\begin{equation}\begin{aligned}
&-\left(2\Re\{\mathbf{r}_{2}^{H}\boldsymbol{\phi}\}+2\Re\{\mathbf{r}_{3}^{H}(\boldsymbol{\phi}\otimes\boldsymbol{\phi})\}+2{\Re}\{\tilde{\boldsymbol {\phi}}_{t}^{\mathrm{H}}\tilde{ \mathbf {R}}\tilde{\boldsymbol {\phi}}\}\right)
\\&=-2\Re\{\mathbf{q}_{1}^{H}\boldsymbol{\phi}+\mathbf{q}_{2}^{H}(\boldsymbol{\phi}\otimes\boldsymbol{\phi})\}
\\&=-2\Re\{\mathbf{q}_{1}^{H}\boldsymbol{\phi}+\boldsymbol{\phi}^{H}\mathbf{Q}_{2}\boldsymbol{\phi}^{\ast}\}
,\end{aligned}\end{equation}
where $\mathbf{q}_{1}=\mathbf{r}_{2}+\mathbf{R}_{4}^{H}\boldsymbol {\phi}_{t}+\mathbf{R}_{5}^{H}(\boldsymbol{\phi}_{t}\otimes\boldsymbol{\phi}_{t})$, $\mathbf{q}_{2}=\mathbf{r}_{3}+\mathbf{R}_{5}\boldsymbol{\phi}_{t}+\mathbf{R}_{6}^{H}(\boldsymbol{\phi}_{t}\otimes\boldsymbol{\phi}_{t})$. $\mathbf{Q}_{2}$ is the reshaped version of $\mathbf{q}_{2}$, i.e., $\mathbf{q}_{2}\triangleq\mathrm{vec}\{\mathbf{Q}_{2}\}$.
The above transformations have decreased the order of the optimization variable $\boldsymbol{\phi}$ from four to two. The solving difficulty has also been decreased.
However, $f_{\mathrm{obj1},\boldsymbol{\phi}}$ is still a non-convex function due to the non-convex term $\Re\{\boldsymbol{\phi}^{H}\mathbf{Q}_{2}\boldsymbol{\phi}^{\ast}\}$. To make it tractable, we continue to employ the MM technique to find a convex surrogate function. To be specific, by defining $\overline{\boldsymbol{\phi }} \triangleq [\Re \lbrace \boldsymbol{\phi}^{T}\rbrace \, \, \Im \lbrace \boldsymbol{\phi }^{T}\rbrace ]^{T}$ and $\overline{\mathbf{Q}}_{2} \triangleq\left[\begin{array}{cc}-\Re \lbrace {\mathbf{Q}_{2}}\rbrace &-\Im \lbrace {\mathbf{Q}_{2}}\rbrace 
    \\ -\Im \lbrace {\mathbf{Q}_{2}}\rbrace & \Re \lbrace {\mathbf{Q}_{2}}\rbrace \end{array}\right]$, the term $\Re\{\boldsymbol{\phi}^{H}\mathbf{Q}_{2}\boldsymbol{\phi}^{\ast}\}$ can be written as $\overline{\boldsymbol{\phi}}^{{T}}\overline{\mathbf{Q}}_{2}\overline{\boldsymbol{\phi }}$.
Then based on the second-order Taylor expansion, a convex upper bound of the non-convex term $\overline{\boldsymbol{\phi}}^{{T}}\overline{\mathbf{Q}}_{2}\overline{\boldsymbol{\phi }}$ can be derived as
\begin{equation}\begin{aligned}&\overline{\boldsymbol{\phi}}^{{T}}\overline{\mathbf{Q}}_{2}\overline{\boldsymbol{\phi }}   \\&
   \!  \leq \!\overline{\boldsymbol{\phi }}_{t}^{{\!T}}\overline{\mathbf{Q}}_{2}\overline{\boldsymbol{\phi }}_{t}\! + \!\overline{\boldsymbol{\phi }}_{t}^{T}(\overline{\mathbf{Q}}_{2}\!+\!\overline{\mathbf{Q}}_{2}^{\!T}) (\overline{\boldsymbol{\phi }}\!-\!\overline{\boldsymbol{\phi }}_{t})
    \! + \!\frac{\lambda_{1} }{2}(\overline{\boldsymbol{\phi }}\!-\!\overline{\boldsymbol{\phi }}_{t})^{\!T} \!(\overline{\boldsymbol{\phi}}\!-\!\overline{\boldsymbol{\phi }}_{t})
    \\ &=\Re \lbrace {\mathbf {q}}_{3}^ {H}\boldsymbol{\phi }\rbrace +c_{1},
   \end{aligned} \end{equation} 
 where $ {\lambda}_{1}= {\lambda}_{ {max}}(\overline{\mathbf{Q}}_{2}+\overline{\mathbf{Q}}_{2}^{T})$, ${\mathbf {q}}_{3}=\mathbf {U}(\overline{\mathbf{Q}}_{2}\!+\!\overline{\mathbf{Q}}_{2}^{\!T}-\lambda_{1} \mathbf {I}_{2 N})\overline{\boldsymbol{\phi }}_{t} $ and $c_{1}=\overline{\boldsymbol{\phi }}_{t}^{{T}}\overline{\mathbf{Q}}_{2}\overline{\boldsymbol{\phi }}_{t}-\overline{\boldsymbol{\phi }}_{t}^{T}(\overline{\mathbf{Q}}_{2}+\overline{\mathbf{Q}}_{2}^{T}) \overline{\boldsymbol{\phi }}_{t}+\frac{\lambda_{1} }{2}\overline{\boldsymbol{\phi }}_{t}^{T}\overline{\boldsymbol{\phi }}_{t}$. 
   $\mathbf {U} \triangleq[\mathbf {I}_{N} \, \jmath \mathbf {I}_{N}]$ is defined  to transform the derived real-valued function back to the complex-valued function. Until now, we have successfully found a convex surrogate function for the first six terms of the objective function \eqref{obj_phi} by exploiting the MM technique. For the remaining parts of the objective function \eqref{obj_phi}, the non-convex term is $\boldsymbol{\phi}^{T}\mathbf{R}_{9}\boldsymbol{\phi}^{\ast}$. Based on its   structure, we have
\begin{equation}\begin{aligned}
\boldsymbol{\phi}^{T}\mathbf{R}_{9}\boldsymbol{\phi}^{\ast}=\Re\{\boldsymbol{\phi}^{T}\mathbf{R}_{9}\boldsymbol{\phi}^{\ast}\}=\overline{\boldsymbol{\phi}}^{T}\overline{\mathbf{R}}_{9}\overline{\boldsymbol{\phi}},
\end{aligned}\end{equation}
where
    $\overline{\mathbf{R}}_{9} \triangleq\left[\begin{array}{cc}\Re \lbrace {\mathbf {R}}_{9}\rbrace &\Im \lbrace {\mathbf {R}}_{9}\rbrace 
    \\ -\Im \lbrace {\mathbf {R}}_{9}\rbrace & \Re \lbrace {\mathbf {R}}_{9}\rbrace \end{array}\right]$. Similar to the above procedures, by using the second-order Taylor expansion, a convex upper bound of $\overline{\boldsymbol{\phi}}^{\mathrm{T}}\overline{\mathbf{R}}_{9}\overline{\boldsymbol{\phi }} $ is derived as
\begin{equation}\begin{aligned} &\overline{\boldsymbol{\phi}}^{\mathrm{T}}\overline{\mathbf{R}}_{9}\overline{\boldsymbol{\phi }} \\& 
    \leq \overline{\boldsymbol{\phi }}_{t}^{{\!T}}\overline{\mathbf {R}}_{9}\overline{\boldsymbol{\phi }}_{t}\! + \!\overline{\boldsymbol{\phi }}_{t}^{T}(\overline{\mathbf{R}}_{9}\!\!+\!\!\overline{\mathbf{R}}_{9}^{\!T}) (\overline{\boldsymbol{\phi }}\!-\!\overline{\boldsymbol{\phi }}_{t})
    \! +\!\frac{\lambda_{2} }{2}(\overline{\boldsymbol{\phi }}\!-\!\overline{\boldsymbol{\phi }}_{t})^{\!T} \!(\overline{\boldsymbol{\phi}}\!-\!\overline{\boldsymbol{\phi }}_{t})
      \\ &=\Re \lbrace {\mathbf {q}}_{4}^{H} \boldsymbol{\phi }\rbrace +c_{2,k},
   \end{aligned} \end{equation} 
where  $ {\lambda}_{2}= {\lambda}_{ {max}}(\overline{\mathbf{R}}_{9}+\overline{\mathbf{R}}_{9}^{T})$, ${\mathbf {q}}_{4}=\mathbf {U}(\overline{\mathbf{R}}_{9}\!\!+\!\!\overline{\mathbf{R}}_{9}^{\!T}-\lambda_{2} \mathbf {I}_{2 N})\overline{\boldsymbol{\phi }}_{t} $ and $c_{2,k}=\overline{\boldsymbol{\phi }}_{t}^{{T}}\overline{\mathbf {L}}_{k}\overline{\boldsymbol{\phi }}_{t}-\overline{\boldsymbol{\phi }}_{t}^{T}(\overline{\mathbf {L}}_{k}+\overline{\mathbf {L}}_{k}^{T}) \overline{\boldsymbol{\phi }}_{t}+\frac{\lambda_{2} }{2}\overline{\boldsymbol{\phi }}_{t}^{T}\overline{\boldsymbol{\phi }}_{t}$. 
Omitting the irrelevant terms, the overall surrogate function of \eqref{obj_phi} is given by
\vspace{-2mm}
\begin{equation}\begin{aligned}
{\tilde{f}}_{\mathrm{obj},\boldsymbol{\phi}}\!&=\!-2\Re\{\mathbf{q}_{1}^{H}\boldsymbol{\phi}\}\!+\!2\Re\{\mathbf{q}_{3}^{H}\boldsymbol{\phi}\}\!\!+\!\!\frac{1}{2\rho_{1}}(2\Re\{\mathbf{r}_{8}^{H}\boldsymbol{\phi}\}\!+\!\Re \lbrace {\mathbf {q}}_{4}^ {H} \boldsymbol{\phi }\rbrace)
\\&=\Re\{\mathbf{q}_{5}^{H}\boldsymbol{\phi}\},
\end{aligned}\end{equation}
where ${\mathrm {q}}_{5}=-2\mathbf{q}_{1}+2\mathbf{q}_{3}+\frac{1}{2\rho_{1}}(2{\mathbf {r}}_{8}+\mathbf{q}_{4})$.
Then, the optimization w.r.t. $\boldsymbol{\phi }$ is reformulated as 
\begin{equation}\text {(P10)} \quad \min _{  \boldsymbol{\phi} }~ \Re\{\mathbf{q}_{5}^{H}\boldsymbol{\phi}\}
 {\qquad  ~\text {s.t.}~} 
    |\phi  _{{i}}|=1, \forall i.\label{23c}
      \end{equation}
Based on the phase alignment, the optimal closed-form solution of problem (P10) is derived as
\begin{equation}\begin{aligned}
\boldsymbol{\phi}^{\star}=-e^{ j\text {arg}(\mathbf {q}_{5})}.
\label{opt_phi}\end{aligned}\end{equation}

\subsubsection{Update \texorpdfstring{$\mathbf{a}$}{\texttwoinferior}}  With fixed $\boldsymbol{\phi}$, $s_{k,i}$ and $\mathbf{a}_{a}$, the sub-problem w.r.t. $\mathbf{a}$ is formulated as
\begin{subequations}\begin{align}\text {(P11)} &\quad \min _{  \mathbf{a} }~ f_{\mathrm{obj},\mathbf{a}} \label{obj_a}
\\& {\qquad  ~\text {s.t.}~} 
   {\mathbf{a}}(i)\in\{0,1\}, \quad i\in \{1,\cdots,N\}.\label{0-1}
     \end{align}  \end{subequations} 
It is obvious that problem (P11) is a binary integer optimization problem, which is NP-hard. However, as mentioned before, the optimization variables $\boldsymbol{\phi}$ and $\mathbf{a}$ have many similarities. By checking problem (P11), we further find that the objective function \eqref{obj_a} has the similar form with \eqref{obj_phi}. They are both of high orders and both involve Kronecker product operation. Moreover, based on the 0-1 constraint \eqref{0-1}, we have $\mathbf{a}^{T}\mathbf{a}=a$, which is a constant. The above discussions motivate us to also employ MM technique to tackle this challenging problem. Specifically, we define $\tilde{\mathbf{a}}_{1}=\begin{bmatrix}\mathbf{a}^{T}  & (\mathbf{a}\!\otimes\!\mathbf{a})^{T}
\end{bmatrix}^{T}$, $\tilde{\mathbf{R}}_{2}=\begin{bmatrix}\mathbf{R}_{13}&\mathbf{R}_{14} ^{H} \\ \mathbf{R}_{14}&\mathbf{R}_{15}
\end{bmatrix}$. As expected, $\tilde{\mathbf{R}}_{2}$ is a positive semi-definite matrix. Then based on the first-order Taylor expansion, we have
\begin{equation}\begin{aligned}
-\tilde{\mathbf{a}}_{1}^{T}\tilde{\mathbf{R}}_{2}\tilde{\mathbf{a}}_{1}\leq \tilde{\mathbf{a}}_{1t}^{T}\tilde{\mathbf{R}}_{2}\tilde{\mathbf{a}}_{1t}-2\Re\{\tilde{\mathbf{a}}_{1t}^{T}\tilde{\mathbf{R}}_{2}\tilde{\mathbf{a}}_{1}\}.
\end{aligned}\end{equation}
Omitting the irrelevant terms, combining the term $-2\Re\{\tilde{\mathbf{a}}_{1t}^{T}\tilde{\mathbf{R}}_{2}\tilde{\mathbf{a}}_{1}\}$ with the terms $-2\Re\{\mathbf{r}_{11}^{H}\mathbf{a}\}$ and $-2\Re\{\mathbf{r}_{12}^{H}(\mathbf{a}\otimes\mathbf{a})\}$, we have
\begin{equation}\begin{aligned}
&-2\Re\{\mathbf{r}_{11}^{H}\mathbf{a}\}
-2\Re\{\mathbf{r}_{12}^{H}(\mathbf{a}\otimes\mathbf{a})\}
-2\Re\{\tilde{\mathbf{a}}_{1t}^{T}\tilde{\mathbf{R}}_{2}\tilde{\mathbf{a}}_{1}\}
\\&=-2\Re\{\mathbf{q}_{6}^{H}\mathbf{a}+\mathbf{q}_{7}^{H}(\mathbf{a}\!\otimes\!\mathbf{a})\}
\\&=-2\Re\{\mathbf{q}_{6}^{H}\mathbf{a}+\mathbf{a}^{T}\mathbf{Q}_{7}\mathbf{a}\},
\end{aligned}\end{equation}
where $\mathbf{q}_{6}=\mathbf{r}_{11}+\mathbf{R}_{13}^{H}{\mathbf{a}}_{t}+\mathbf{R}_{14}^{H}(\mathbf{a}_{t}\otimes\mathbf{a}_{t})$, $\mathbf{q}_{7}=\mathbf{r}_{12}+\mathbf{R}_{14}{\mathbf{a}}_{t}+\mathbf{R}_{15}^{H}(\mathbf{a}_{t}\!\otimes\!\mathbf{a}_{t})$, 
${\mathbf{Q}}_{7}$ is the reshaped version of ${\mathbf{q}}_{7}$, i.e., ${\mathbf{q}}_{7}=\mathrm{vec}\{{\mathbf{Q}}_{7}\}$.
Since $\mathbf{a}^{T}\mathbf{Q}_{7}\mathbf{a}$ is still non-convex, we continue to employ MM technique to find a convex surrogate function. Defining $\tilde{\mathbf{Q}}_{7}=\Re\{-\mathbf{Q}_{7}\}$, we have $-\Re\{\mathbf{a}^{T}\mathbf{Q}_{7}\mathbf{a}\}=\mathbf{a}^{T}\tilde{\mathbf{Q}}_{7}\mathbf{a}$. Then based on the second-order Taylor expansion, a convex surrogate function of $\mathbf{a}^{T}\tilde{\mathbf{Q}}_{7}\mathbf{a}$ can be derived as 
\begin{equation}\begin{aligned}&
\mathbf{a}^{T}\tilde{\mathbf{Q}}_{7}\mathbf{a}\\&\leq\mathbf{a}_{t}^{T}\tilde{\mathbf{Q}}_{7}\mathbf{a}_{t} \!+\!\mathbf{a}_{t}^{T}(\tilde{\mathbf{Q}}_{7}\!+\!\tilde{\mathbf{Q}}_{7}^{T})(\mathbf{a}-\mathbf{a}_{t})\!+\!\frac{\lambda_{3}}{2}(\mathbf{a}-\mathbf{a}_{t})^{T}(\mathbf{a}-\mathbf{a}_{t}),
\end{aligned}\end{equation}
where  $ {\lambda}_{3}= {\lambda}_{ {max}}(\tilde{\mathbf{Q}}_{7}+\tilde{\mathbf{Q}}_{7}^\mathrm{T})$. 
Omitting the known terms irrelevant with $\mathbf{a}$, we can find that the remaining term is linear. Then, combining it with $-2\Re\{\mathbf{q}_{6}^{H}\mathbf{a}\}$, we have
\begin{equation}\begin{aligned}
-2\Re\{\mathbf{q}_{6}^{H}\mathbf{a}\}+2\Re\{\mathbf{a}_{t}^{T}(\tilde{\mathbf{Q}}_{7}+\tilde{\mathbf{Q}}_{7}^{T})\mathbf{a}-\lambda_{3}\mathbf{a}_{t}^{T}\mathbf{a}\}=\Re\{\mathbf{q}_{8}^{H}\mathbf{a}\},
\end{aligned}\end{equation}
where $\mathbf{q}_{8}=-2\mathbf{q}_{6}+2(\tilde{\mathbf{Q}}_{7}+\tilde{\mathbf{Q}}_{7}^{T}-\lambda_{3}\mathbf{I}_{N})\mathbf{a}_{t}$.
For the term $\mathbf{a}^{T}\mathbf{R}_{18}\mathbf{a}$, since $\mathbf{R}_{18}$ is a Hermitian matrix, we can use the second-order Taylor expansion to find a linear surrogate function.
\begin{equation}\begin{aligned}&
\mathbf{a}^{T}\mathbf{R}_{18}\mathbf{a}\\&\leq\mathbf{a}^{T}{\Lambda}_{4}\mathbf{a} +2\Re\{\mathbf{a}^{T}(\mathbf{R}_{18}-{\Lambda}_{4})\mathbf{a}_{t}\}+\mathbf{a}_{t}^{T}({\Lambda}_{4}-{\mathbf{R}}_{18})\mathbf{a}_{t},
\end{aligned}\end{equation}
where $\boldsymbol {\Lambda}_{4}\!=\! {\lambda}_{ {max}}({\mathbf{R}}_{18}){\mathbf{I}}_{N}$. Combining the term $2\Re\{\mathbf{a}^{T}(\mathbf{R}_{18}\!-\!{\Lambda}_{4})\mathbf{a}_{t}\}$ with $2\Re\{\mathbf{r}_{17}^{H}\mathbf{a}\}$, we have
\begin{equation}\begin{aligned}
\frac{1}{2\rho_{1}}(2\Re\{\mathbf{r}_{17}^{H}\mathbf{a}+\mathbf{a}^{T}(\mathbf{R}_{18}-{\Lambda}_{4})\mathbf{a}_{t}\})=\frac{1}{\rho_{1}}\Re\{\mathbf{q}_{9}^{H}\mathbf{a}\},
\end{aligned}\end{equation}
where $\mathbf{q}_{9}=\mathbf{r}_{17}+(\mathbf{R}_{18}-{\Lambda}_{4})\mathbf{a}_{t}$. By adding all the linear surrogate functions together, the objective function is simplified as 
\begin{equation}\begin{aligned}
\tilde{f}_{\mathrm{obj},\mathbf{a}}=\Re\{\mathbf{q}_{8}^{H}\mathbf{a}\}
+ \frac{1}{\rho_{1}}\Re\{\mathbf{q}_{9}^{H}\mathbf{a}\}
+\frac{1}{2\rho_{2}}(\mathbf{r}_{19}^{T}\mathbf{a})
=\Re\{\mathbf{q}_{10}^{H}\mathbf{a}\}
,\end{aligned}\end{equation}
where $\mathbf{q}_{10}=\mathbf{q}_{8}+\frac{1}{\rho_{1}}\mathbf{q}_{9}+\frac{1}{2\rho_{2}}\mathbf{r}_{19}$.
Then, the optimization problem w.r.t. $\mathbf{a}$ is expressed as 
\begin{subequations}\begin{align}\text {(P12)} &\quad \min _{  \mathbf{a} }~ \Re\{\mathbf{q}_{10}^{H}\mathbf{a}\}
\\& {\qquad  ~\text {s.t.}~} 
     {\mathbf{a}}(i)\in\{0,1\}, \quad i\in \{1,\cdots,N\}.
     \end{align}  \end{subequations}
If we denote the set of first $a$ minimum elements of $\Re\{\mathbf{q}_{10}\}$ as $\mathbb{L}$, the optimal $\mathbf{a}^{\star}$ can be obtained as
\begin{equation}
\mathbf{a}^{\star}(i)=
\left\{
\begin{aligned}
1,& \quad\Re\{\mathbf{q}_{10}\}(i) \in\mathbb{L} \\\
0,& \quad\Re\{\mathbf{q}_{10}\}(i) \notin\mathbb{L}.
\end{aligned}
\right.
\label{opt_a}\end{equation}
The optimal $\mathbf{a}^{\star}$ implies that we need to find first $a$ minimum elements from $\Re\{\mathbf{q}_{10}\}$. The corresponding indexes are the locations of the elements working on the connected mode.

\subsubsection{Update \texorpdfstring{$\mathbf{a}_{a}$}{\texttwoinferior}} With fixed $\boldsymbol{\phi}$, $s_{k,i}$ and $\mathbf{a}$, the sub-problem w.r.t. $\mathbf{a}_{a}$ is formulated as
\begin{subequations}\begin{align}\text {(P13)} &\quad \min _{  \mathbf{a}_{a}}~ f_{\mathrm{obj},\mathbf{a}_{a}}
\\& {\qquad  ~\text {s.t.}~} 
     \sum _{l=1}^{N} \mathbf{a}_{ai} [l]=1, \mathbf{a}_{ai} [l]\in\{0,1\}, \forall i, \forall l.
     \end{align}  \end{subequations}
This problem is similar to problem (P11). Following the same procedure,
 the optimal solution $\mathbf{a}_{a}^{\star}$ can be obtained. Therefore, the detailed derivations are omitted here. 

    \begin{algorithm}[t]
    \caption{Joint beamforming and mode selection for RDARS-aided ISAC.}
    \label{alg:example}
    \begin{algorithmic}[1] 
    \REQUIRE ~~$\mathbf {h}_{bu,k}$, $\mathbf {h}_{ru,k}$, $\mathbf {H}_{br}$, $\mathbf {h}_{bt}$, $\mathbf {h}_{rt}$, $P$, $\sigma _{1,k}^{2}$, $\sigma _{2}^{2}$, $\sigma _{\alpha}^{2}$, $\overline{\gamma}_{k}$, $\rho_{1}$, $\rho_{2}$.
    \STATE {Initialize: ${\boldsymbol {\Phi} }^{0} $, $ {\mathbf{F}}^{0}  $, $\mathbf {A}^{0}$, $\mathbf{A}_{a}^{0}$}.  
    \REPEAT 
    \STATE {Update $\mathbf{w}$  by solving problem (P2).}
    \STATE {Update   ${\mathbf{f}}$  by solving problem (P4).}
    \STATE {Update $\boldsymbol{\phi}$ based on \eqref{opt_phi}.}
    \STATE {Update $s_{k,i}$ based on \eqref{opt_ski} .}
    \STATE {Update $\mathbf{A}$ based on \eqref{opt_a}.}
    \STATE {Update $\mathbf{A}_{a}$ by solving problem (P13).}
    \STATE {$\rho_{1}:=c_{1}\rho_{1}$, $\rho_{2}:=c_{2}\rho_{1}$}
    \UNTIL {the convergence is satisfied.}
    \ENSURE ~~ $\mathbf{w}^{\star}$, $\mathbf{f}^{\star}$, $\boldsymbol {\Phi}^{\star} $, $\mathbf{A}^{\star}$,  $\mathbf{A}_{a}^{\star}$.  
    \end{algorithmic}
    \end{algorithm}

\vspace{-2mm}
\subsection{Convergence and Complexity Analysis}
With the derivations of the optimal receive filter $\mathbf{w}^{\star}$, the  optimal transmit beamforming matrix $\mathbf{F}^{\star}$, the  optimal RDARS reflection matrix $\boldsymbol{\Phi}^{\star}$ and the optimal selection matrices $\mathbf{A}^{\star}$ and $\mathbf{A}_{a}^{\star}$, the overall optimization process is summarized in Algorithm~\ref{alg:example}. 
This iterative process ensures that the objective value of problem (P1) continually improves in each iteration. Moreover, because of the communication SINR constraint, this objective value is upper-bounded by a finite limit. Consequently, the proposed algorithm is guaranteed to converge to a locally optimal solution \cite{7558213}.

Next, we discuss the computational complexity of the proposed algorithm. In Algorithm~\ref{alg:example}, the original optimization problem is decoupled into several sub-problems. For the first sub-problem, where we optimize the receive filter $\mathbf{w}$, the complexity lies in its eigenvalue decomposition (EVD), which is given by $\mathcal {O} \big (  {M}^{3} \big)$. For the transmit beamforming matrix $\mathbf{F}$ optimization, it's a SOCP problem whose computation complexity is given by $\mathcal {O} \big (((M+a)K)^{3.5} \big)$. The complexities for optimizing $\boldsymbol {\Phi}$, $\mathbf{A}$ and  $\mathbf{A}_{a}$ are all dominated by the EVD, which are given by  $\mathcal {O} \big (  ({2N})^{3} \big)$,  $\mathcal {O} \big (  ({2N})^{3} \big)$ and  $\mathcal {O} \big (  {N}^{3} \big)$, respectively. The complexity for optimizing $\boldsymbol {\Phi}$ is given by $\mathcal {O} \big (   K\log_2(\frac{1}{\epsilon})N_{r}^{2} \big)$, where $\epsilon$ is the iteration accuracy. Therefore, denoting the total number of required iterations as $I$, the overall computational  complexity of Algorithm~\ref{alg:example} can be calculated as $\mathcal {O} \big (I( {M}^{3}+((M+a)K)^{3.5}+K\log_2(\frac{1}{\epsilon})N_{r}^{2}+{17N}^{3} )   \big)$. 


\renewcommand{\algorithmicrequire}{ \textbf{Input:}} 
\renewcommand{\algorithmicensure}{ \textbf{Output :}} 

\vspace{-2mm}
\section{Numerical and Simulation Results}
\begin{table}[t] 
\setlength{\tabcolsep}{3pt} 
\renewcommand{\arraystretch}{1.2} 
\begin{center}  
\caption{The Main Simulation Parameters.}  
\label{table1}
\begin{tabular}{c c c}   
\hline   \textbf{Description} & \textbf{Parameter} & \textbf{Value} \\   
\hline   
Number of transmit/receive antennas at BS & M& 16  \\ 
Number of RDARS elements & N & 120 \\  
Number of connected elements  & a & 3  \\ 
Total transmit power  & P & 20 dBm  \\  
Noise power  & \(\sigma_{1,k},\sigma_{2} \) & -80 dBm  \\ 
AoD towards RDARS at BS  & \(\theta_{br}^{D}\)& \(\frac{\pi}{2}\)  \\ 
Azimuth AoA at RDARS   & \(\theta_{br}^{A}\) &\(\frac{\pi}{4}\) \\ 
Azimuth AoD towards target at RDARS  &\(\theta_{rt}^{D}\)  & \(\frac{\pi}{4}\)  \\ 
Azimuth AoD towards target at BS  & \(\theta_{bt}^{D}\)  &\(\mathrm{arctan}(\frac{d_{br}}{d_{rt}})\) \\ 
Elevation AoD towards target at RDARS   &\(\psi_{rt}^{D}\)  &\(\mathrm{arctan}(\frac{d_{H}}{d_{rt}})\)  \\ 
Path loss exponent for channel $\mathbf{H}_{br}$ & \({\alpha }_{{H}_{br}}\)&2.4 \\ 
Path loss exponent for channel $\mathbf{h}_{bt}$  & \({\alpha }_{{h}_{bt}}\)& 2.3  \\ 
Path loss exponent for channel $\mathbf{h}_{bu,k}$   & \({\alpha }_{{h}_{bu,k}}\)& 3.0  \\ 
Path loss exponent for channel $\mathbf{h}_{rt}$ &\({\alpha }_{{h}_{rt}}\)& 2.0   \\ 
Path loss exponent for channel $\mathbf{h}_{ru,k}$ & \({\alpha }_{{h}_{ru,k}}\)& 2.6   \\ 
\hline   
\end{tabular}   
\end{center}   
\end{table}


This section evaluates the proposed RDARS-aided ISAC framework and our joint beamforming and mode selection design via numerical and simulation results.
We consider a 3D Cartesian coordinate system, and assume that the BS, RDARS and the target are located in (15, 0, 5)m, (0, 0, 5)m, and (0, 10, 0)m, respectively.
The users are randomly distributed within a circle centered at (50, 50, 0)m with a radius of 5m.

The main simulation parameters are summarized in Table \ref{table1}.
The azimuth AoD at the BS towards the RDARS is set to $\theta_{br}^{D}=\frac{\pi}{2}$.
The azimuth AoA at the RDARS is set to
$\theta_{br}^{A}=\frac{\pi}{4}$.
The azimuth AoD at the RDARS towards the target  is set to $\theta_{rt}^{D}=\frac{\pi}{4}$.
We denote the azimuth distance between the BS and RDARS by $d_{br}$, and that between the RDARS and the target by $d_{rt}$.
Let the height of the BS be $d_{H}$.
The azimuth AoD at the BS towards the target and the elevation AoD at the RDARS  towards the target can be written as
$$\theta_{bt}^{D}=\mathrm{arctan}\left(\frac{d_{br}}{d_{rt}}\right),~~~\psi_{rt}^{D}=\mathrm{arctan}\left(\frac{d_{H}}{d_{rt}}\right).$$

We adopt the distance-based path loss model $P\!=\!P_{0}(d/d_{0})^{-\alpha}$, where $P_{0}=-30$ dB is the reference path loss at $d_{0}=1$m. 
We set the path loss exponents for the links between BS and RDARS, BS and target, and BS and user $k$ as ${\alpha }_{{H}_{br}}=2.4,{\alpha}_{{h}_{bt}}=2.3$ and ${\alpha }_{{h}_{bu,k}}=3.0$, respectively.
On the other hand, the path loss exponents for the link between the RDARS and target,  and that between RDARS and user $k$ are set to ${\alpha }_{{h}_{rt}}=2.0$ and ${\alpha }_{{h}_{ru,k}}=2.6$, respectively.

The communication SINR threshold for each user is set to $\overline{\gamma}_{k}=10$ dB. Unless otherwise specified, we set the noise power to $\sigma_{1,k}=\sigma_{2}=-80$ dBm and the total transmit power to $P=20$ dBm.
The BS is equipped with $M_t=16$ transmit antennas and $M_r=16$ receive antennas. The RDARS has 120 elements in total, among which three elements working in the connected mode.
The initial values of the penalty coefficients are set to $\rho_{1}=10^{3}$ and $\rho_{2}=10^{5}$, respectively.


\begin{figure}[t]
    \centering
    \includegraphics[width=3.1in]{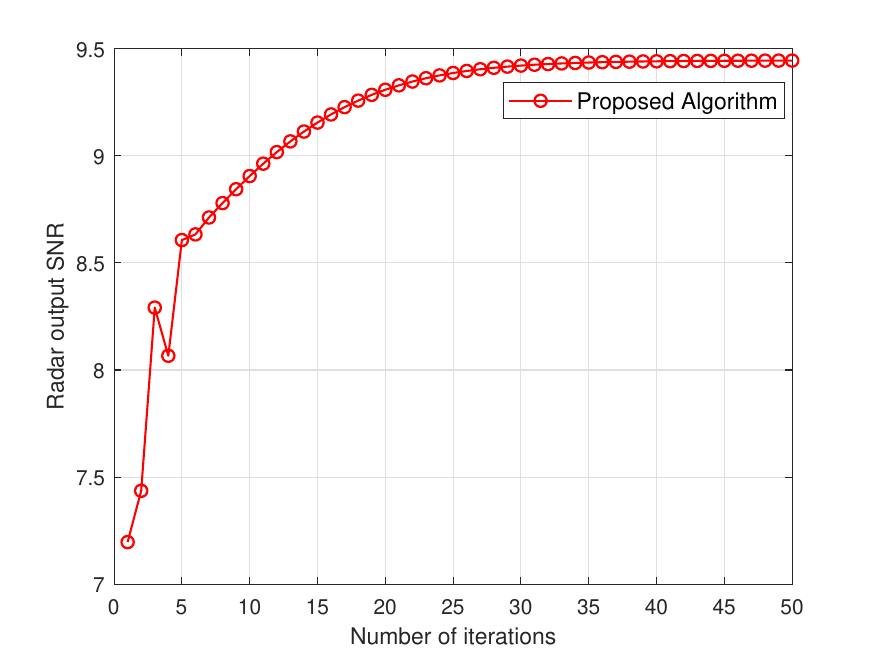}
   \vspace{-3mm}
    \caption{Convergence behavior of the proposed joint beamforming and mode selection for RDARS-aided ISAC. }
   \label{fig_convergence}
\end{figure}

To demonstrate the superiority of our RDARS-aided ISAC framework and the jointly designed beamformer and mode selection, we have established a set of benchmarks for evaluation. To enhance clarity in our discussion, we have assigned specific names to each of these schemes based on their reflective surface configuration and primary function. For example, our scheme adopts RDRAS as the transmitting and reflective surface to achieve ISAC, thus it is denoted by ``RDARS-ISAC''.
\begin{enumerate}
\item
\textbf{RDRAS-sensing, optimized phase:} In this scenario, RDRAS is used exclusively for sensing, and the phase shifts are optimized using our approach.
\item
\textbf{RDRAS-sensing, random phase:} RDRAS is used only for sensing, and we assume that the phase for each RDARS reflection coefficient is independently and uniformly sampled from [0, 2$\pi$). 
\item
\textbf{DAS-ISAC:} The RDARS has no reflective elements, but only three active elements working in the connected mode. That is, this is a DAS and the goal is to perform ISAC. 
\item
\textbf{DAS-sensing:} The RDARS has three active elements only, and the system is designed for sensing.
\item
\textbf{RDARS-ISAC, random phase:} The proposed RDARS-ISAC framework, but the phase shifts are not optimized.
\item
\textbf{Passive RIS-ISAC:} The passive RIS-aided ISAC system proposed in \cite{luo2023ris}.
\end{enumerate}






It is worth noting that our RDARS-aided system is general and can cover the above systems as special cases by configuring the RDARS elements at different working modes accordingly. For example, by setting the selection matrix $\mathbf{A}$ as all-zero matrix, the RDARS-aided system reduces to the passive RIS system. By setting the phase shift matrix as all-zero matrix, the RDARS-aided system reduces to the DAS system. Under sensing-only scenario, it can also be regarded as a distributed MIMO radar system. The design freedom on the selection matrix and the phase shifts offers promising flexibility and substantial gain to support a wide range of practical applications with various performance requirements.

To start with, we verify the convergence behavior of the proposed joint design for RDARS-aided ISAC in Algorithm~\ref{alg:example}. As can be seen from Fig. \ref{fig_convergence}, with the increase in the number of iterations, the radar output SNR continues to increase and  finally converges to a fixed value, demonstrating the robust convergence of the proposed algorithm.

\begin{figure}[t]
    \centering
    \includegraphics[width=3.1in]{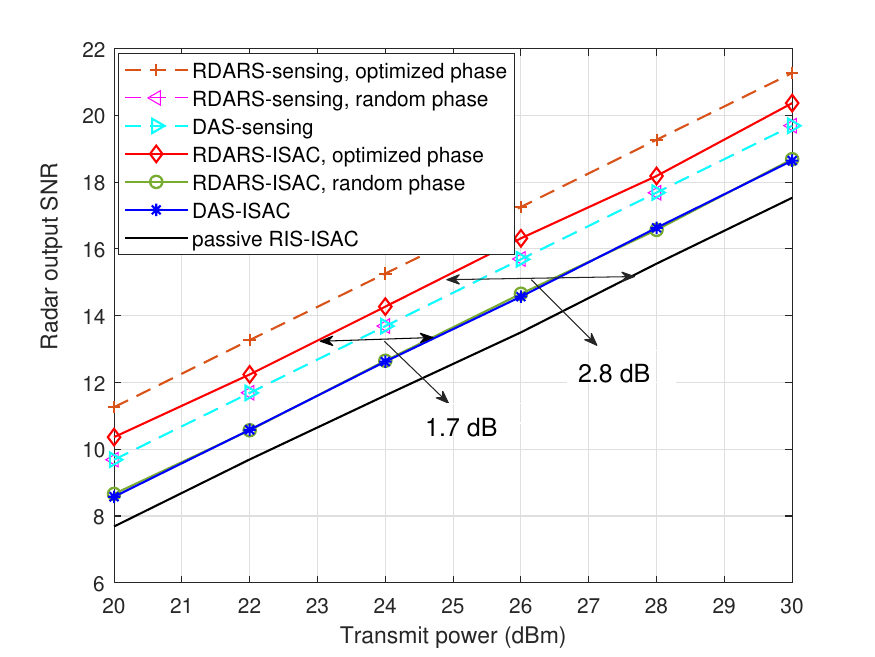}
    \vspace{-2mm}
    \caption{Radar output SNR versus transmit power.} 
   \label{fig_B}
    \vspace{-2mm}
\end{figure}  
 
\begin{figure}[t]
    \centering
    \includegraphics[width=3.1in]{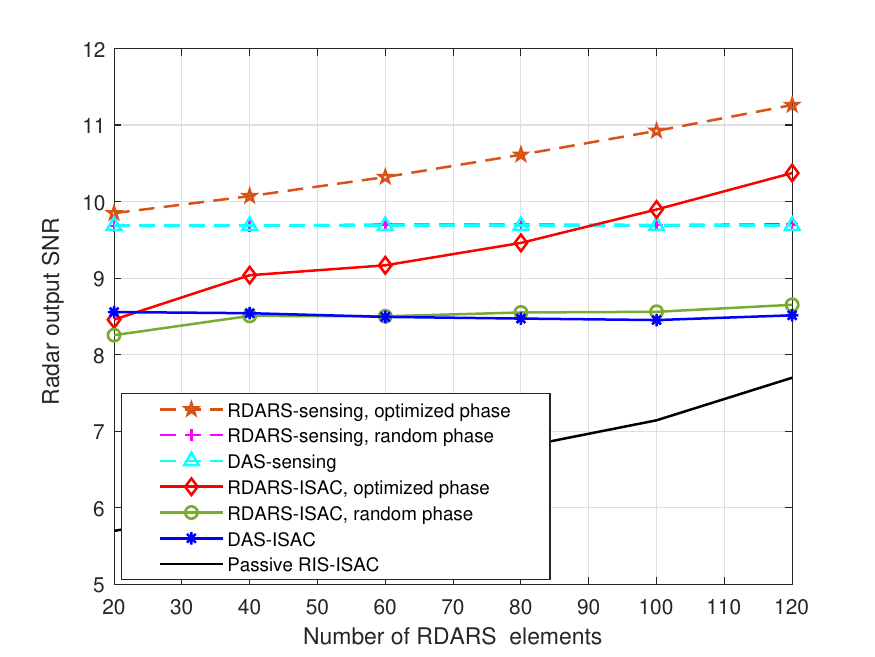}
    \vspace{-2mm}
    \caption{Radar output SNR versus the number of total RDARS elements.}
   \label{fig_lr}
    \vspace{-2mm}
\end{figure} 
 
\begin{figure}[t]
    \centering
    \includegraphics[width=3.1in]{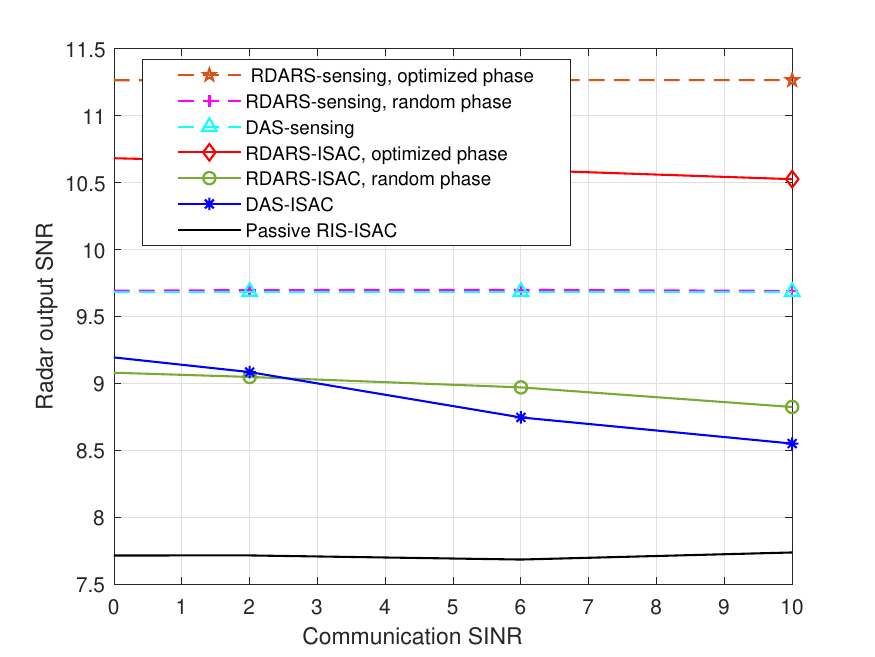}
    \vspace{-2mm}
    \caption{Radar output SNR versus communication SINR.}
   \label{fig_communication}
    \vspace{-2mm}
    \end{figure} 


\begin{figure}[t]
    \centering
    \includegraphics[width=3.1in]{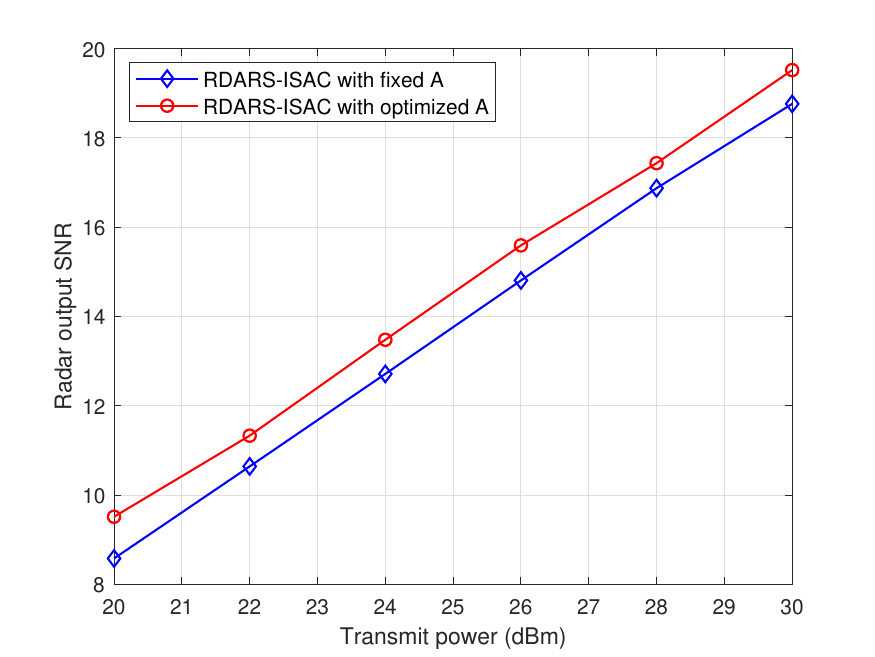}
    \vspace{-2mm}
    \caption{Radar output SNR versus transmit power for RDARS-ISAC system with fixed and optimized selection matrix $\mathbf{A}$.} 
   \label{fig_A}
\end{figure}
Then, we evaluate our RDARS-ISAC system benchmarked against the above baselines.
Fig. \ref{fig_B} presents the radar output SNR versus the transmit power.
We have four main observations:
\begin{itemize}
    \item As the transmit power increases, all schemes achieve better performance. Passive RIS-ISAC performs the worst as the passive RIS only provides the passive beamforming gain. Compared with the passive RIS-ISAC, the RDARS-ISAC can provide about 2.8dB performance enhancement, which shows that even a few connected elements of RDARS can significantly boost system performance enhancement.
    \item The ``sensing'' systems perform better than the corresponding ``ISAC'' systems. This matches our intuitive as all the power in ``sensing'' systems can be used for radar sensing, while ``ISAC'' systems have to accommodate both sensing and communication performances.
    \item RDARS-ISAC with optimized phase shifts outperforms RDARS-ISAC with random phase shifts. This confirms the effectiveness of the proposed algorithm.
    \item RDARS-ISAC can achieve about 1.7dB performance improvement in contrast to DAS-ISAC, thanks to the passive beamforming gain brought by RDARS.
\end{itemize}

From these observations, we can conclude that by incorporating not only connected elements as distributed antennas but also the reflecting elements into the RDARS, substantial distribution and reflection gains can be achieved by our proposed optimal design.

Fig. \ref{fig_lr}  presents the radar output SNR versus the number of total RDARS elements. As the number of RDARS elements increases, the RDARS-sensing with optimized phase, RDARS-ISAC with optimized phase and  passive RIS-RDARS schemes all exhibit better performance since more passive reflecting elements are included and then bring higher reflection gain. For DAS-sensing and DAS-ISAC systems, since they don't have passive reflecting elements, the performance remains unchanged as we increase the number of the total RDARS elements while keeping the number of connected elements as $a=3$. Though the RDARS-sensing with random phase and RDARS-ISAC with random phase systems equip with passive reflecting elements, these elements adopt the random phase, which can't provide passive beamforming gains. Therefore, the RDARS-sensing with random phase and RDARS-ISAC with random phase schemes nearly have the same performance with DAS-sensing  and DAS-ISAC  schemes, respectively.

Fig. \ref{fig_communication} presents the radar output SNR versus the communication SINR for all schemes. 
A first observation is that the performance of all ISAC systems degrades when a higher communication SINR threshold is set. This is intuitive as more power is needed for communication system to satisfy the requirement; hence, there is a performance trade off between  sensing and communication. On the other hand, the performance of  ``sensing'' systems remains unchanged with the increase in the communication SINR requirement.


Fig. \ref{fig_A} verifies the effectiveness of the proposed algorithm for optimizing the selection matrix $\mathbf{A}$. In ``RDARS-ISAC with fixed $\mathbf{A}$'', the locations of the elements working in the connected mode are predefined and fixed. Specifically, we assume that the first three diagonal  elements of the RDARS work in the connected mode.  By contrast, in ``RDARS-ISAC with optimized $\mathbf{A}$'', the  locations of the elements working on the connected mode are optimized based on  Algorithm~\ref{alg:example}. Fig. \ref{fig_A} confirms that ``RDARS-ISAC with optimized $\mathbf{A}$'' exhibits higher performance gains than ``RDARS-ISAC with fixed $\mathbf{A}$''. This gain is the unique selection gain brought by the RDARS architecture with flexible mode configuration. Thus, optimizing the locations of the elements working in the connected mode is of great significance for RDARS-ISAC system, and our proposed algorithm has proven its effectiveness in this regard.

\begin{figure*}[t]
	\centering
	\begin{minipage}{0.35\linewidth}
		\centerline{\includegraphics[width=1.1\textwidth]{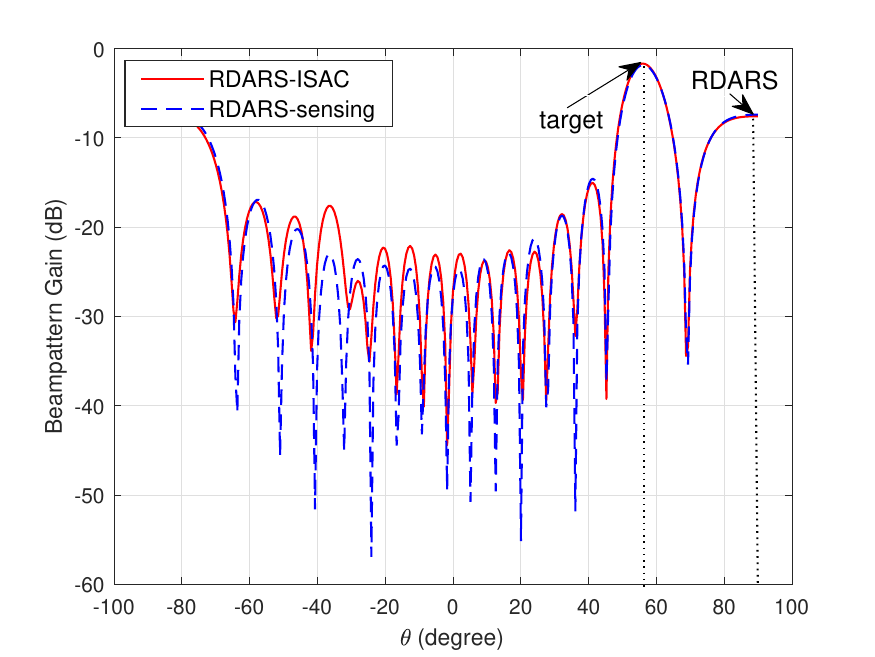}}
		\centerline{(a)}
	\end{minipage}\hspace{1cm}
	\begin{minipage}{0.35\linewidth}
		\centerline{\includegraphics[width=1.1\textwidth]{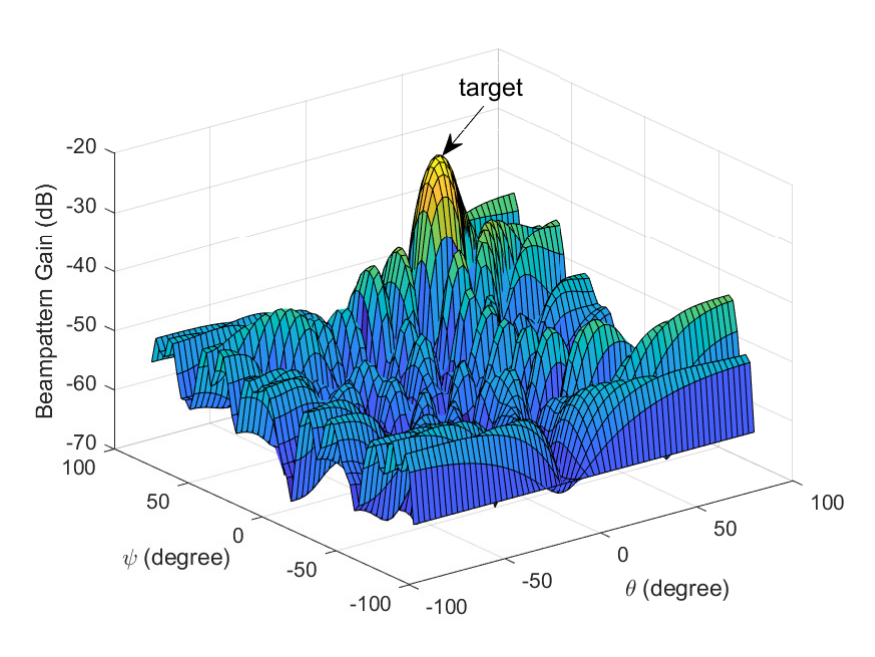}}	 
		\centerline{(b)}
	\end{minipage}
	\caption{(a) The achieved beampattern gain at the BS versus the AoD $\theta$; (b) The achieved beampattern gain at the RDARS versus the azimuth AoD $\theta$ and elevation AoD $\psi$.}
	\label{gain}
\end{figure*}

Finally, we evaluate the achieved beampattern gain at the BS versus the AoD $\theta$, and that at the RDARS versus the azimuth AoD $\theta$ and elevation AoD $\psi$.
The simulation results are illustrated in Fig. \ref{gain}(a) and Fig. \ref{gain}(b), respectively.
From Fig. \ref{gain}(a), it can be observed that the proposed transmit beamformers can strongly direct the mainlobes toward the directions where the target and the RDARS are located. In addition, as shown in Fig. \ref{gain}(b),  a strong beam can also be generated at RDARS to direct the signals to the target, thanks to the passive beamforming optimization for the reflection elements of RDARS.
Fig. \ref{gain}(a) also shows that the sidelobes of the ISAC system is higher than those of the only radar system, which is due to the additional SINR constraints for the communication system.


 \vspace{-4mm}
\section{Conclusion}

This paper introduced a new RDARS-aided ISAC framework and provided valuable insights into the reflection, distribution and selection gains it offers in radar and communication systems.
The flexible functionality of RDARS, operating as both reflection devices and transmit antennas, exemplifies its transformative potential in elevating the capabilities of ISAC systems.
Furthermore, the presented joint optimization algorithm serves as a promising approach to tackle the intricate optimization challenges that such systems often face. Its application not only enhances the performance of RDARS-aided ISAC but also underscores the feasibility of this technology in practical scenarios.

Moving forward, our research will continue to explore various aspects of RDARS-aided ISAC, including real-world constraint  and practical considerations, thus paving the way for the widespread adoption of this new framework in future wireless communication and radar applications.

\vspace{-3mm}
 \begin{appendices}
\section{ Proof of Lemma 1 }\label{sec:AppA}
\vspace{-1mm}
We first express the objective function w.r.t. $\boldsymbol{\phi}$ explicitly. According to the diagonal structure of $\boldsymbol{\Phi}$, we have $\boldsymbol{\Phi} \mathbf{h}_{rt}=\mathrm{diag}(\mathbf{h}_{rt})\boldsymbol{\phi}$. 
 Defining $\mathbf{H}_{3}\triangleq(\mathbf{I}-\mathbf{A})\mathbf{H}_{br}$, $\mathbf{h}_{4}\triangleq\mathbf{h}_{bt}+\mathbf{H}_{3}\boldsymbol{\Phi} \mathbf{h}_{rt}$, we have
\vspace{-2mm}
\begin{align}
&\mathbf{w}^{H}{\mathbf{H}}_{2}{\mathbf{f}}_{k}\\&=\mathbf{w}^{H}\mathbf{h}_{4}\begin{bmatrix}\mathbf{h}_{4}^{T}
&\mathbf{h}_{rt}^{{T}}\mathbf{A}_{a}\end{bmatrix}\begin{bmatrix}\mathbf{f}_{k1}\\ \mathbf{f}_{k2}\end{bmatrix}
\\&=\mathbf{w}^{H}\mathbf{h}_{bt}\mathbf{h}_{bt}^{{T}}\mathbf{f}_{k1}+\mathbf{w}^{H}\mathbf{h}_{bt}\mathbf{h}_{rt}^{{T}}\mathbf{A}_{a}\mathbf{f}_{k2}\\& \quad+[(\mathrm{diag}(\mathbf{h}_{rt})\mathbf{H}_{3}\mathbf{f}_{k1})^{T}
\otimes\mathbf{w}^{H}\mathbf{h}_{bt}
\\&\quad+
(\mathbf{h}_{bt}^{{T}}\mathbf{f}_{k1})^{T}\otimes\mathbf{w}^{H}\mathbf{H}_{3}^{T}\mathrm{diag}(\mathbf{h}_{rt})
\\&\quad+(\mathbf{h}_{rt}^{{T}}\mathbf{A}_{a}\mathbf{f}_{k2})^{T}
\otimes\mathbf{w}^{H}\mathbf{H}_{3}^{T}\mathrm{diag}(\mathbf{h}_{rt})]\mathrm{vec}(\boldsymbol{\phi})
\\&\quad+[(\mathrm{diag}(\mathbf{h}_{rt}))\mathbf{H}_{3}\mathbf{f}_{k1} )^{T}\otimes\mathbf{w}^{H}\mathbf{H}_{3}^{T}\mathrm{diag}(\mathbf{h}_{rt})]\mathrm{vec}(\boldsymbol{\phi}\boldsymbol{\phi}^{T})
\\&={x}_{1,k}+\mathbf{Y}_{1,k}\boldsymbol{\phi}+\mathbf{Z}_{1,k}(\boldsymbol{\phi}\otimes\boldsymbol{\phi}),
\end{align}
where we define
\begin{subequations}\begin{align}
&{x}_{1,k}\triangleq\mathbf{w}^{H}\mathbf{h}_{bt}\mathbf{h}_{bt}^{{T}}\mathbf{f}_{k1}+\mathbf{w}^{H}\mathbf{h}_{bt}\mathbf{h}_{rt}^{{T}}\mathbf{A}_{a}\mathbf{f}_{k2},
\\&\mathbf{Y}_{1,k}\triangleq(\mathrm{diag}(\mathbf{h}_{rt})\mathbf{H}_{3}\mathbf{f}_{k1})^{T}
\otimes\mathbf{w}^{H}\mathbf{h}_{bt} \notag
\\&\quad\quad\quad+
(\mathbf{h}_{bt}^{{T}}\mathbf{f}_{k1})^{T}\otimes\mathbf{w}^{H}\mathbf{H}_{3}^{T}\mathrm{diag}(\mathbf{h}_{rt})  
\notag
\\&\quad\quad\quad+(\mathbf{h}_{rt}^{{T}}\mathbf{A}_{a}\mathbf{f}_{k2})^{T}
\otimes\mathbf{w}^{H}\mathbf{H}_{3}^{T}\mathrm{diag}(\mathbf{h}_{rt}), 
\\& \mathbf{Z}_{1,k}\triangleq(\mathrm{diag}(\mathbf{h}_{rt}))\mathbf{H}_{3}\mathbf{f}_{k1} )^{T}\otimes\mathbf{w}^{H}\mathbf{H}_{3}^{T}\mathrm{diag}(\mathbf{h}_{rt}). 
\end{align}  \end{subequations}
Based on the above transformations, the first term of (\ref{obj_lag}) can be expressed as 
\begin{equation}\begin{aligned}
f_{\mathrm{obj1},\boldsymbol{\phi}}&=\mathbf{w}^{H}{\mathbf{H}}_{2}\mathbf{F}\mathbf{F}^{H}{\mathbf{H}}_{2}^{H}\mathbf{w}\\&=
\sum_{k=1}^{K}\mathbf{f}_{k}^{H}{\mathbf{H}}_{2}^{H}\mathbf{w}\mathbf{w}^{H}{\mathbf{H}}_{2}\mathbf{f}_{k}
\\&={r}_{1}+2\Re\{\boldsymbol{\phi}^{H}\mathbf{r}_{2}\}+2\Re\{(\boldsymbol{\phi}\otimes\boldsymbol{\phi})^{H}\mathbf{r}_{3}\}\\&\quad
+\boldsymbol{\phi}^{H}\mathbf{R}_{4}\boldsymbol{\phi}+2\Re\{(\boldsymbol{\phi}\otimes\boldsymbol{\phi})^{H}\mathbf{R}_{5}\boldsymbol{\phi}\}
\\& \quad+(\boldsymbol{\phi}\otimes\boldsymbol{\phi})^{H}\mathbf{R}_{6}(\boldsymbol{\phi}\otimes\boldsymbol{\phi}),
\end{aligned}\end{equation}
where we define
\begin{subequations}\begin{align}
&{r}_{1}\triangleq\sum_{k=1}^{K}\mathbf{x}_{1,k}^{H}\mathbf{x}_{1,k}, \quad \mathbf{r}_{2}\triangleq\sum_{k=1}^{K}\mathbf{Y}_{1,k}^{H}\mathbf{x}_{1,k}, \label{r1_r2}\\&\mathbf{r}_{3}\triangleq\sum_{k=1}^{K}\mathbf{Z}_{1,k}^{H}\mathbf{x}_{1,k}, \quad \mathbf{R}_{4}\triangleq\sum_{k=1}^{K}\mathbf{Y}_{1,k}^{H}\mathbf{Y}_{1,k},\label{r3_r4}
\\&\mathbf{R}_{5}\triangleq\sum_{k=1}^{K}\mathbf{Z}_{1,k}^{H}\mathbf{Y}_{1,k},
\quad \mathbf{R}_{6}\triangleq\sum_{k=1}^{K}\mathbf{Z}_{1,k}^{H}\mathbf{Z}_{1,k}.
\label{r5_r6}\end{align}  \end{subequations} 
Based on the same idea, the second term of (\ref{obj_lag}) can be derived as 
\begin{equation}\begin{aligned}
f_{\mathrm{obj2},\boldsymbol{\phi}}&=
\frac{1}{2\rho_{1}}\sum _{k =1}^{K}\sum _{i=1}^{K}|{\mathbf{h}}_{1,k}^{T}\mathbf{f}_{i}-s_{k,i}|^{2}
\\& =\frac{1}{2\rho_{1}}\sum _{k =1}^{K}\sum _{i=1}^{K}({\mathbf{h}}_{1,k}^{T}\mathbf{f}_{i}-s_{k,i})({\mathbf{h}}_{1,k}^{T}\mathbf{f}_{i}-s_{k,i})^{H}
\\& =\frac{1}{2\rho_{1}}\sum _{k =1}^{K}\sum _{i=1}^{K}(\tilde{s}_{k,i}+\boldsymbol{\phi}^{T}\mathrm{diag}(\mathbf{h}_{ru,k})\mathbf{H}_{3}\mathbf{f}_{i1})(\tilde{s}_{k,i}^{H}\\&\quad+\mathbf{f}_{i1}^{H}\mathbf{H}_{3}^{H}\mathrm{diag}(\mathbf{h}_{ru,k}^{\ast})\boldsymbol{\phi}^{\ast})
\\&=\frac{1}{2\rho_{1}}(r_{7}+2\Re\{\mathbf{r}_{8}^{H}\boldsymbol{\phi}\}+\boldsymbol{\phi}^{T}\mathbf{R}_{9}\boldsymbol{\phi}^{\ast}),
\end{aligned}\end{equation}
where we define
\begin{subequations}\begin{align}
&\tilde{s}_{k,i}\triangleq{\mathbf{h}}_{bu,k}^{T}\mathbf{f}_{i1}+\mathbf{h}_{ru,k}^{T}\mathbf{A}_{a}\mathbf{f}_{i2}-s_{k,i},
\\& r_{7}\triangleq\sum _{k =1}^{K}\sum _{i=1}^{K}\tilde{s}_{k,i}\tilde{s}_{k,i}^{H}, \label{r7}\\& \mathbf{r}_{8}\triangleq\sum _{k =1}^{K}\sum _{i=1}^{K}\mathrm{diag}(\mathbf{h}_{ru,k})\mathbf{H}_{3}\mathbf{f}_{i1}\tilde{s}_{k,i}^{H}, \label{r8}\\& \mathbf{R}_{9}\triangleq\sum _{k =1}^{K}\sum _{i=1}^{K}\mathrm{diag}(\mathbf{h}_{ru,k})\mathbf{H}_{3}\mathbf{f}_{i1}\mathbf{f}_{i1}^{H}\mathbf{H}_{3}^{H}\mathrm{diag}(\mathbf{h}_{ru,k}^{\ast})\label{r9}.
\end{align}  \end{subequations} 
Omitting the irrelevant terms, the overall objective function w.r.t. $\boldsymbol{\phi}$ can be written as
\begin{equation}\begin{aligned}
&f_{\mathrm{obj},\boldsymbol{\phi}}\\&=
f_{\mathrm{obj1},\boldsymbol{\phi}}+f_{\mathrm{obj2},\boldsymbol{\phi}}\\&=
-({r}_{1}+2\Re\{\mathbf{r}_{2}^{H}\boldsymbol{\phi}\}+2\Re\{\mathbf{r}_{3}^{H}(\boldsymbol{\phi}\otimes\boldsymbol{\phi})\}
+\boldsymbol{\phi}^{H}\mathbf{R}_{4}\boldsymbol{\phi}\\& \quad +2\Re\{(\boldsymbol{\phi}\otimes\boldsymbol{\phi})^{H}\mathbf{R}_{5}\boldsymbol{\phi}\}
+(\boldsymbol{\phi}\otimes\boldsymbol{\phi})^{H}\mathbf{R}_{6}(\boldsymbol{\phi}\otimes\boldsymbol{\phi}))\\& \quad +\frac{1}{2\rho_{1}}(r_{7}+2\Re\{\mathbf{r}_{8}^{H}\boldsymbol{\phi}\}+\boldsymbol{\phi}^{T}\mathbf{R}_{9}\boldsymbol{\phi}^{\ast}).
\end{aligned}\end{equation}
Following the same process, the first two parts of the objective function w.r.t $\mathbf{a}$ and $\mathbf{a}_{a}$ are expressed as
\begin{equation}\begin{aligned}
&f_{\mathrm{obj1},\mathbf{a}}+f_{\mathrm{obj2},\mathbf{a}}\\&=-(r_{10}+2\Re\{\mathbf{r}_{11}^{H}\mathbf{a}\}
+2\Re\{\mathbf{r}_{12}^{H}(\mathbf{a}\!\otimes\!\mathbf{a})\}
+\mathbf{a}^{T}\mathbf{R}_{13}\mathbf{a}
\\& \quad
+2\Re\{(\mathbf{a}\!\otimes\!\mathbf{a})^{T}\mathbf{R}_{14}\mathbf{a}\}
+(\mathbf{a}\!\otimes\!\mathbf{a})^{T}\mathbf{R}_{15}(\mathbf{a}\!\otimes\!\mathbf{a}))
\\&\quad+\frac{1}{2\rho_{1}}(r_{16}+2\Re\{\mathbf{r}_{17}^{H}\mathbf{a}\}+\mathbf{a}^{T}\mathbf{R}_{18}\mathbf{a}),
\end{aligned}\end{equation}
and 
\begin{equation}\begin{aligned}
&f_{\mathrm{obj1},\mathbf{a}_{a}}+f_{\mathrm{obj2},\mathbf{a}_{a}}\\&=-(r_{21}+2\Re\{{\mathbf{r}}_{22}^{H}\mathbf{a}_{a}+\mathbf{a}_{a}^{T}{\mathbf{R}}_{23}\mathbf{a}_{a}\})+\frac{1}{2\rho_{1}}(r_{24}\\&\quad+2\Re\{\mathbf{r}_{25}^{H}\mathbf{a}_{a}\}+\mathbf{a}_{a}^{T}\mathbf{R}_{26}\mathbf{a}_{a}),
\end{aligned}\end{equation}
respectively. $r_{10}$, $\mathbf{r}_{11}$, $\mathbf{r}_{12}$, $\mathbf{R}_{13},\cdots,\mathbf{R}_{15}$, $r_{16}$, $\mathbf{r}_{17}$, $\mathbf{R}_{18}$, $r_{21}$, $\mathbf{r}_{22}$, $\mathbf{R}_{23}$, ${r}_{24}$, $\mathbf{r}_{25}$ and $\mathbf{R}_{26}$ 
have the similar forms with ${r}_{1}$, $\mathbf{r}_{2}$, $\mathbf{r}_{3}$, $\mathbf{R}_{4},\cdots, \mathbf{R}_{6}$, ${r}_{7}$, $\mathbf{r}_{8}$, $\mathbf{R}_{9}$, so we omit the details of them due to the space limitations. 
For the third part of the objective function w.r.t $\mathbf{a}$ and $\mathbf{a}_{a}$, we have
\begin{subequations}\begin{align}
f_{\mathrm{obj3},\mathbf{a}}&=
\frac{1}{2\rho_{2}}\lVert  \mathbf{A}-\mathbf{A}_{a}\mathbf{A}_{a}^{T}\rVert_{F}^{2}
\notag\\ & =\frac{1}{2\rho_{2}}\mathrm{tr}[(\mathbf{A}^{T}-\mathbf{A}_{a}\mathbf{A}_{a}^{T})(\mathbf{A}-\mathbf{A}_{a}\mathbf{A}_{a}^{T})]
\notag\\ & =\frac{1}{2\rho_{2}}\mathrm{tr}[\mathbf{A}(\mathbf{I}-2\mathbf{A}_{a}\mathbf{A}_{a}^{T})+\mathbf{A}_{a}\mathbf{A}_{a}^{T}]
\notag\\ & =\frac{1}{2\rho_{2}}(\mathbf{r}_{19}^{T}\mathbf{a}+r_{20}),
\\f_{\mathrm{obj3},\mathbf{a}_{a}}&=
\frac{1}{2\rho_{2}}\lVert  \mathbf{A}-\mathbf{A}_{a}\mathbf{A}_{a}^{T}\rVert_{F}^{2}
\notag\\ & =\frac{1}{2\rho_{2}}\mathrm{tr}[(\mathbf{A}^{T}-\mathbf{A}_{a}\mathbf{A}_{a}^{T})(\mathbf{A}-\mathbf{A}_{a}\mathbf{A}_{a}^{T})]
\notag\\ & =\frac{1}{2\rho_{2}}[\mathrm{tr}(\mathbf{A}+\mathbf{A}_{a}\mathbf{A}_{a}^{T})-2\mathrm{tr}(\mathbf{A}\mathbf{A}_{a}\mathbf{A}_{a}^{T})]
\notag\\ & =  \frac{1}{\rho_{2}}[a-\mathbf{a}_{a}^{T}(\mathbf{I}_{a}\otimes\mathbf{A})\mathbf{a}_{a}],
\end{align}\end{subequations}
where we define
\begin{equation}\begin{aligned}
&\mathbf{r}_{19}\triangleq\mathrm{diag}(\mathbf{I}-2\mathbf{A}_{a}\mathbf{A}_{a}^{T}), 
\quad r_{20}\triangleq\mathrm{tr}(\mathbf{A}_{a}\mathbf{A}_{a}^{T}). 
\label{r19_r20}\end{aligned}\end{equation}
Then, the explicit form of the objective functions w.r.t.  $\mathbf{a}$ and $\mathbf{a}_{a}$ can be obtained as
\begin{align}
f_{\mathrm{obj},\mathbf{a}}\notag&=f_{\mathrm{obj1},\mathbf{a}}+f_{\mathrm{obj2},\mathbf{a}}+f_{\mathrm{obj3},\mathbf{a}}
\notag\\&=-(r_{10}+2\Re\{\mathbf{r}_{11}^{H}\mathbf{a}\}
+2\Re\{\mathbf{r}_{12}^{H}(\mathbf{a}\!\otimes\!\mathbf{a})\}
+\mathbf{a}^{T}\mathbf{R}_{13}\mathbf{a}
\notag\\& \quad
+2\Re\{(\mathbf{a}\!\otimes\!\mathbf{a})^{T}\mathbf{R}_{14}\mathbf{a}\}
+(\mathbf{a}\!\otimes\!\mathbf{a})^{T}\mathbf{R}_{15}(\mathbf{a}\!\otimes\!\mathbf{a}))
\!+\!\frac{1}{2\rho_{1}}\notag\\&\quad \quad(r_{16}+2\Re\{\mathbf{r}_{17}^{H}\mathbf{a}\}+\mathbf{a}^{T}\mathbf{R}_{18}\mathbf{a})+\frac{1}{2\rho_{2}}(\mathbf{r}_{19}^{T}\mathbf{a}+r_{20}),
\end{align}
and 
\begin{equation}\begin{aligned}
f_{\mathrm{obj},\mathbf{a}_{a}}&=f_{\mathrm{obj1},\mathbf{a}_{a}}+f_{\mathrm{obj2},\mathbf{a}_{a}}+f_{\mathrm{obj3},\mathbf{a}_{a}}
\\&=-(r_{21}+2\Re\{{\mathbf{r}}_{22}^{H}\mathbf{a}_{a}+\mathbf{a}_{a}^{T}{\mathbf{R}}_{23}\mathbf{a}_{a}\})+\frac{1}{2\rho_{1}}(r_{24}+
\\&\quad \quad 2\Re\{\mathbf{r}_{25}^{H}\mathbf{a}_{a}\}+\mathbf{a}_{a}^{T}\mathbf{R}_{26}\mathbf{a}_{a})\!+\!\frac{1}{\rho_{2}}(a-\mathbf{a}_{a}^{T}(\mathbf{I}_{a}\otimes\mathbf{A})\mathbf{a}_{a}),
\end{aligned}\end{equation}
respectively. Thus, we complete the proof of Lemma 1.


\end{appendices}

 \vspace{-2mm}

\begin{thebibliography}{10}
\providecommand{\url}[1]{#1}
\csname url@samestyle\endcsname
\providecommand{\newblock}{\relax}
\providecommand{\bibinfo}[2]{#2}
\providecommand{\BIBentrySTDinterwordspacing}{\spaceskip=0pt\relax}
\providecommand{\BIBentryALTinterwordstretchfactor}{4}
\providecommand{\BIBentryALTinterwordspacing}{\spaceskip=\fontdimen2\font plus
\BIBentryALTinterwordstretchfactor\fontdimen3\font minus \fontdimen4\font\relax}
\providecommand{\BIBforeignlanguage}[2]{{%
\expandafter\ifx\csname l@#1\endcsname\relax
\typeout{** WARNING: IEEEtran.bst: No hyphenation pattern has been}%
\typeout{** loaded for the language `#1'. Using the pattern for}%
\typeout{** the default language instead.}%
\else
\language=\csname l@#1\endcsname
\fi
#2}}
\providecommand{\BIBdecl}{\relax}
\BIBdecl

\bibitem{1}
\BIBentryALTinterwordspacing
P.Brown, ``75.4 billion devices connected to the internet of things by 2025,'' \emph{Electronics}, vol. 360, 2016. [Online]. Available: \url{https://electronics360.globalspec.com/article/6551/75-4-billion-devices-connected-to-the-internet-of-things-by-2025}
\BIBentrySTDinterwordspacing

\bibitem{zhang2021enabling}
J.~A. Zhang, M.~L. Rahman, K.~Wu, X.~Huang, Y.~J. Guo, S.~Chen, and J.~Yuan, ``Enabling joint communication and radar sensing in mobile networks—a survey,'' \emph{IEEE Commun. Surv. Tutor.}, vol.~24, no.~1, pp. 306--345, 2021.

\bibitem{liu2020joint}
F.~Liu, C.~Masouros, A.~P. Petropulu, H.~Griffiths, and L.~Hanzo, ``Joint radar and communication design: Applications, state-of-the-art, and the road ahead,'' \emph{{IEEE} Trans. Commun.}, vol.~68, no.~6, pp. 3834--3862, 2020.

\bibitem{liu2022survey}
A.~Liu, Z.~Huang, M.~Li, Y.~Wan, W.~Li, T.~X. Han, C.~Liu, R.~Du, D.~K.~P. Tan, J.~Lu \emph{et~al.}, ``A survey on fundamental limits of integrated sensing and communication,'' \emph{IEEE Commun. Surv. Tutor.}, vol.~24, no.~2, pp. 994--1034, 2022.

\bibitem{liu2022integrated}
F.~Liu, Y.~Cui, C.~Masouros, J.~Xu, T.~X. Han, Y.~C. Eldar, and S.~Buzzi, ``Integrated sensing and communications: Towards dual-functional wireless networks for {6G} and beyond,'' \emph{{IEEE} J. Sel. Areas Commun.}, 2022.

\bibitem{yuan2021integrated}
W.~Yuan, Z.~Wei, S.~Li, J.~Yuan, and D.~W.~K. Ng, ``Integrated sensing and communication-assisted orthogonal time frequency space transmission for vehicular networks,'' \emph{{IEEE} J. Sel. Topics Signal Process.}, vol.~15, no.~6, pp. 1515--1528, 2021.

\bibitem{8892631}
Z.~Cheng, B.~Liao, S.~Shi, Z.~He, and J.~Li, ``Co-design for overlaid {MIMO} radar and downlink {MISO} communication systems via {C}ramér–{R}ao bound minimization,'' \emph{{IEEE} Trans. Signal Process.}, vol.~67, no.~24, pp. 6227--6240, 2019.

\bibitem{wang2023stars}
Z.~Wang, X.~Mu, and Y.~Liu, ``{STARS} enabled integrated sensing and communications,'' \emph{{IEEE} Trans. Wireless Commun.}, 2023.

\bibitem{10056867}
J.~Wang, S.~Gong, Q.~Wu, and S.~Ma, ``{RIS}-aided {MIMO} systems with hardware impairments: Robust beamforming design and analysis,'' \emph{{IEEE} Trans. Wireless Commun.}, pp. 1--1, 2023.

\bibitem{zhang2023double}
P.~Zhang, S.~Gong, and S.~Ma, ``Double-{RIS} aided multi-user {MIMO} communications: Common reflection pattern and joint beamforming design,'' \emph{{IEEE} Trans. Veh. Technol.}, pp. 1--6, 2023.

\bibitem{9429987}
S.~Gong, C.~Xing, X.~Zhao, S.~Ma, and J.~An, ``Unified {IRS}-aided {MIMO} transceiver designs via majorization theory,'' \emph{{IEEE} Trans. Signal Process.}, vol.~69, pp. 3016--3032, 2021.

\bibitem{9364358}
Z.-M. Jiang, M.~Rihan, P.~Zhang, L.~Huang, Q.~Deng, J.~Zhang, and E.~M. Mohamed, ``Intelligent reflecting surface aided dual-function radar and communication system,'' \emph{IEEE Syst. J.}, vol.~16, no.~1, pp. 475--486, 2022.

\bibitem{luo2023ris}
H.~Luo, R.~Liu, M.~Li, and Q.~Liu, ``{RIS}-aided integrated sensing and communication: Joint beamforming and reflection design,'' \emph{{IEEE} Trans. Veh. Technol.}, 2023.

\bibitem{xing2022passive}
Z.~Xing, R.~Wang, and X.~Yuan, ``Passive beamforming design for reconfigurable intelligent surface enabled integrated sensing and communication,'' \emph{arXiv preprint arXiv:2206.00525}, 2022.

\bibitem{9844707}
H.~Zhang, ``Joint waveform and phase shift design for {RIS}-assisted integrated sensing and communication based on mutual information,'' \emph{{IEEE} Commun. Lett.}, vol.~26, no.~10, pp. 2317--2321, 2022.

\bibitem{liu2022joint}
R.~Liu, M.~Li, Y.~Liu, Q.~Wu, and Q.~Liu, ``Joint transmit waveform and passive beamforming design for {RIS}-aided {DFRC} systems,'' \emph{{IEEE} J. Sel. Topics Signal Process.}, vol.~16, no.~5, pp. 995--1010, 2022.

\bibitem{xu2023joint}
Y.~Xu, Y.~Li, J.~A. Zhang, and M.~Di~Renzo, ``Joint beamforming for {RIS}-assisted integrated sensing and communication systems,'' \emph{arXiv preprint arXiv:2303.01771}, 2023.

\bibitem{he2022ris}
Y.~He, Y.~Cai, H.~Mao, and G.~Yu, ``{RIS}-assisted communication radar coexistence: Joint beamforming design and analysis,'' \emph{{IEEE} J. Sel. Areas Commun.}, vol.~40, no.~7, pp. 2131--2145, 2022.

\bibitem{9591331}
X.~Wang, Z.~Fei, J.~Huang, and H.~Yu, ``Joint waveform and discrete phase shift design for {RIS}-assisted integrated sensing and communication system under {Cramer-Rao} bound constraint,'' \emph{{IEEE} Trans. Veh. Technol.}, vol.~71, no.~1, pp. 1004--1009, 2022.

\bibitem{10138058}
X.~Song, J.~Xu, F.~Liu, T.~X. Han, and Y.~C. Eldar, ``Intelligent reflecting surface enabled sensing: Cramér-rao bound optimization,'' \emph{{IEEE} Trans. Signal Process.}, vol.~71, pp. 2011--2026, 2023.

\bibitem{9998527}
Z.~Zhang, L.~Dai, X.~Chen, C.~Liu, F.~Yang, R.~Schober, and H.~V. Poor, ``Active {RIS} vs. passive {RIS}: Which will prevail in {6G}?'' \emph{{IEEE} Trans. Commun.}, vol.~71, no.~3, pp. 1707--1725, 2023.

\bibitem{long2021active}
R.~Long, Y.-C. Liang, Y.~Pei, and E.~G. Larsson, ``Active reconfigurable intelligent surface-aided wireless communications,'' \emph{{IEEE} Trans. Wireless Commun.}, vol.~20, no.~8, pp. 4962--4975, 2021.

\bibitem{sankar2022beamforming}
R.~P. Sankar and S.~P. Chepuri, ``Beamforming in hybrid {RIS} assisted integrated sensing and communication systems,'' in \emph{2022 30th European Signal Processing Conference (EUSIPCO)}.\hskip 1em plus 0.5em minus 0.4em\relax IEEE, 2022, pp. 1082--1086.

\bibitem{salem2022active}
A.~A. Salem, M.~H. Ismail, and A.~S. Ibrahim, ``Active reconfigurable intelligent surface-assisted {MISO} integrated sensing and communication systems for secure operation,'' \emph{{IEEE} Trans. Veh. Technol.}, 2022.

\bibitem{yu2023active}
Z.~Yu, G.~Zhou, H.~Ren, C.~Pan, B.~Wang, M.~Dong, and J.~Wang, ``Active {RIS} aided integrated sensing and communication systems,'' \emph{arXiv preprint arXiv:2302.08934}, 2023.

\bibitem{9979782}
Y.~Zhang, J.~Chen, C.~Zhong, H.~Peng, and W.~Lu, ``Active {IRS}-assisted integrated sensing and communication in {C-RAN},'' \emph{{IEEE} Wireless Commun. Lett.}, vol.~12, no.~3, pp. 411--415, 2023.

\bibitem{10184278}
W.~Hao, H.~Shi, G.~Sun, and C.~Huang, ``Joint beamforming design for active {RIS}-aided {THz} {ISAC} systems with delay alignment modulation,'' \emph{{IEEE} Commun. Lett.}, vol.~12, no.~10, pp. 1816--1820, 2023.

\bibitem{ma2023reconfigurable}
C.~Ma, X.~Yang, J.~Wang, G.~Yang, and S.~Ma, ``Reconfigurable distributed antennas and reflecting surface ({RDARS}): A new architecture for wireless communications,'' \emph{arXiv preprint arXiv:2303.06950}, 2023.

\bibitem{10197455}
X.~Qian, X.~Hu, C.~Liu, M.~Peng, and C.~Zhong, ``Sensing-based beamforming design for joint performance enhancement of {RIS}-aided {ISAC} systems,'' \emph{{IEEE} Trans. Commun.}, vol.~71, no.~11, pp. 6529--6545, 2023.

\bibitem{9854847}
W.~Yang, M.~Li, and Q.~Liu, ``A novel anchor-assisted channel estimation for {RIS}-aided multiuser communication systems,'' \emph{{IEEE} Commun. Lett.}, vol.~26, no.~11, pp. 2740--2744, 2022.

\bibitem{9366805}
L.~Wei, C.~Huang, G.~C. Alexandropoulos, C.~Yuen, Z.~Zhang, and M.~Debbah, ``Channel estimation for {RIS}-empowered multi-user {MISO} wireless communications,'' \emph{{IEEE} Trans. Commun.}, vol.~69, no.~6, pp. 4144--4157, 2021.

\bibitem{zhou2022channel}
G.~Zhou, C.~Pan, H.~Ren, P.~Popovski, and A.~L. Swindlehurst, ``Channel estimation for {RIS}-aided multiuser millimeter-wave systems,'' \emph{{IEEE} Trans. Signal Process.}, vol.~70, pp. 1478--1492, 2022.

\bibitem{sun2016majorization}
Y.~Sun, P.~Babu, and D.~P. Palomar, ``Majorization-minimization algorithms in signal processing, communications, and machine learning,'' \emph{{IEEE} Trans. Signal Process.}, vol.~65, no.~3, pp. 794--816, 2016.

\bibitem{CVX}
\BIBentryALTinterwordspacing
M.~Grant and S.~Boyd, ``{CVX}: {MATLAB} software for disciplined convex programming,'' 2016. [Online]. Available: \url{http://cvxr.com/cvx/}
\BIBentrySTDinterwordspacing

\bibitem{9133435}
Q.~Wu and R.~Zhang, ``Joint active and passive beamforming optimization for intelligent reflecting surface assisted {SWIPT} under {QoS} constraints,'' \emph{{IEEE} J. Sel. Areas Commun.}, vol.~38, no.~8, pp. 1735--1748, 2020.

\bibitem{boyd_vandenberghe_2004}
S.~Boyd and L.~Vandenberghe, \emph{Convex Optimization}.\hskip 1em plus 0.5em minus 0.4em\relax Cambridge University Press, 2004.

\bibitem{9913311}
M.~Hua, Q.~Wu, C.~He, S.~Ma, and W.~Chen, ``Joint active and passive beamforming design for {IRS}-aided radar-communication,'' \emph{{IEEE} Trans. Wireless Commun.}, vol.~22, no.~4, pp. 2278--2294, 2023.

\bibitem{shi2011iteratively}
Q.~Shi, M.~Razaviyayn, Z.-Q. Luo, and C.~He, ``An iteratively weighted {MMSE} approach to distributed sum-utility maximization for a {MIMO} interfering broadcast channel,'' \emph{{IEEE} Trans. Signal Process.}, vol.~59, no.~9, pp. 4331--4340, 2011.

\bibitem{7558213}
Q.~Shi, M.~Hong, X.~Gao, E.~Song, Y.~Cai, and W.~Xu, ``Joint source-relay design for full-duplex {MIMO} {AF} relay systems,'' \emph{{IEEE} Trans. Signal Process.}, vol.~64, no.~23, pp. 6118--6131, 2016.

\end{thebibliography}

\end{document}